\newcommand{\mat}{\boldsymbol}
\newcommand{\abs}[1]{\left\lvert #1\right\rvert}
\newcommand{\abss}[1]{\lvert #1\rvert}
\newcommand{\tvs}[1]{\lVert #1\rVert_{\text{\tiny TV}}}
\newcommand{\E}[1]{\operatorname{\mathds{E}}\left\{#1\right\}}
\newcommand{\Es}[1]{\operatorname{\mathds{E}}\{#1\}}
\newcommand{\ind}[1]{\operatorname{\mathds{1}}\left\{#1\right\}}
\theoremstyle{plain}
\newenvironment{theorem}
  {\pushQED{\qed}\theoremflat}
  {\popQED\endtheoremflat}
\newenvironment{lemma}
  {\pushQED{\qed}\lemmaflat}
  {\popQED\endlemmaflat}  
\theoremstyle{definition}
\newenvironment{definition}
  {\pushQED{\qed}\definitionflat}
  {\popQED\enddefinitionflat}
\newenvironment{example}
  {\pushQED{\qed}\exampleflat}
  {\popQED\endexampleflat}
\begin{document}
\title{Communication and Interference Coordination}

\author{\IEEEauthorblockN{Ricardo Blasco-Serrano, Ragnar Thobaben, and Mikael Skoglund}
\IEEEauthorblockA{KTH Royal Institute of Technology and ACCESS Linnaeus Centre\\
SE-100 44, Stockholm, Sweden\\
E-mail: \{ricardo.blasco, ragnar.thobaben, mikael.skoglund\}@ee.kth.se}}


\maketitle

\begin{abstract}
We study the problem of controlling the interference created to an external observer by a communication processes. We model the interference in terms of its type (empirical distribution), and we analyze the consequences of placing constraints on the admissible type. Considering a single interfering link, we characterize the communication-interference capacity region. Then, we look at a scenario where the interference is jointly created by two users allowed to coordinate their actions prior to transmission. In this case, the trade-off involves communication and interference as well as coordination. We establish an achievable communication-interference region and show that efficiency is significantly improved by coordination.
\end{abstract}

\IEEEpeerreviewmaketitle

\section{Introduction}
Communication is subject to undesirable and often unavoidable interference that degrades the performance of neighboring transceivers and impairs the operation of nearby electronic devices. From an information-theoretic point of view, interference has traditionally been studied using the interference channel, which models the mutual effects between two user pairs that communicate simultaneously. This channel abstraction captures the fundamental tradeoff between the communication rates of the two pairs. In spite of decades of efforts, our understanding of this tradeoff is only partial or restricted to some special cases (see \cite[Chapter~6]{GK11} for a basic summary). In addition, the model 
is less appropriate for the cases where the impairment is created to a different type of device that is not necessarily communicating. An alternative view of interference that goes beyond communication-impairment effects was proposed in \cite{BG11}. The authors modeled the communication-induced disturbances in terms of the undesired information rate and investigated the limits on the communication rate imposed by a constraint on the disturbance. They characterized explicitly the rate-disturbance region for the single disturbance case and gave partial results for other cases.

In this work, we take a similar approach although our model for the interference is quite different. Instead of endowing the interference with an informational meaning, we characterize it in terms of its type (i.e., empirical distribution). Thus, we study which communication rates are compatible with constraints placed on the type of the interference created by the communication process.  
Our results are therefore related to the study of channels with constraints on the channel inputs (e.g., see \cite[Sec.~3.3]{GK11} and references therein) and on the channel outputs \cite[Sec.~29]{Sha48}. Our motivation is similar to that in \cite{Gas07}, where output constraints were used as a model for the external power restrictions encountered, for example, in cognitive radio systems. As we shall see, our results for the single user can be interpreted as a generalization of those in \cite{Gas07} for discrete channels. Moreover, our work is also connected to \cite{SV97}, which studies the empirical distributions of capacity-achieving codes, although our codes are characterized both by communication properties (i.e., vanishing error probabilities) \emph{and} interference constraints (i.e., convergence of the interference type in an appropriate sense). 

We also consider a multiuser set-up in which the transmitters are allowed to coordinate their actions to mitigate the joint effect of their interference and improve the overall efficiency. This is closely related to the problem of coordination in networks, which was studied in \cite{CPC10}. Most relevant to our work, the authors characterized (empirical) coordination in terms of the type of the sequences of actions and established the fundamental limits for a variety of network topologies. We show that this framework for coordination is very useful when different transmitters are subject to a common interference constraint.

In the remainder of this section we introduce the basic mathematical concepts and establish the notation. We consider the single user case in Section~\ref{sec:single_user} and a multiple user case in Section~\ref{sec:multiple_user}. Finally, we conclude our work in Section~\ref{sec:concl}.


\subsection{Preliminaries}
We consider exclusively random variables with finite alphabets. We denote them and their realizations using upper case and lower case letters, respectively (e.g., $X$ and $x$). We use bold face for vectors and specify their lengths using superindices (e.g., $\mat{x}^n$). We use calligraphic letters (e.g., $\mathcal{T}$ or $\mathscr{T}$) to denote sets. Given a set $\mathcal{T}$, we denote its complement by $\mathcal{T}^c$.

\begin{definition}[Total Variation]\label{def:tv}
Let $P_{X,Y}$ and $Q_{X,Y}$ be two probability distributions defined on $\mathcal{X}\times\mathcal{Y}$. The total variation between them is defined as
\begin{align*}
\tvs{P_{X,Y}-Q_{X,Y}}\triangleq\frac{1}{2}\sum_{x,y}\abs{P_{X,Y}(x,y)-Q_{X,Y}(x,y)}.
\end{align*}
\end{definition}

\begin{definition}[Type]\label{def:type}
Let $\mat{x}^n\in\mathcal{X}^n$ and $\mat{y}^n\in\mathcal{Y}^n$. The type of the tuple $(\mat{x}^n,\mat{y}^n)$ is defined as
\begin{align*}
T_{\mat{x}^n,\mat{y}^n}(x,y)\triangleq\frac{1}{n}\sum_{i=1}^n\ind{(x_i,y_i)=(x,y)}
\end{align*}
for all $(x,y)\in\mathcal{X}\times\mathcal{Y}$, where $\ind{\cdot}$ is the indicator function. 
\end{definition}

\begin{definition}[Typical sequence]\label{def:typ_seq}
Let $\mat{x}^n\in\mathcal{X}^n$ and $\epsilon>0$. We say that the sequence $\mat{x}^n$ is ($\epsilon$-)typical with respect to a distribution $P_X$ if $\tvs{T_{\mat{x}^n}-P_X}<\epsilon$. We denote by $\mathcal{T}_{\epsilon}^{(n)}(P_X)$ the set of all such sequences.
\end{definition}

Most of our results involve the following notion of convergence of sequences of probability distributions. Consider a sequence (indexed by $n$) of random vectors $\mat{X}^n$ with $\mat{X}^n\sim P_{\mat{X}^n}$  for some sequence of distributions $P_{\mat{X}^n}$, and the corresponding sequence of types $T_{\mat{X}^n}$. 
Consider also a sequence of deterministic distributions $G^{(n)}$. We say that $T_{\mat{X}^n}$ converges in probability in total variation to $G^{(n)}$ if
\begin{align*}
\lim_{n\rightarrow\infty}\Pr(\tvs{T_{\mat{X}^n}-G^{(n)}}\ge\epsilon)=0
\end{align*}
for all $\epsilon>0$. We denote this using the shorthand notation 
\begin{align*}
\tvs{T_{\mat{X}^n}-G^{(n)}}\rightarrow 0\text{ in probability}.
\end{align*}
(The specialization of this notion of convergence to the case of fixed $G$ or to deterministic sequences is straightforward.)

\section{Single User}
\label{sec:single_user}

Consider the scenario depicted in Figure~\ref{fig:SingleUserScenario}. This corresponds to a discrete memoryless channel (DMC) with one input $X$ and two outputs $Y$ and $Z$. The output $Y$ is the observation at the intended receiver, while $Z$ corresponds to an undesired interference created to an external observer. The channel is governed by a conditional probability mass function (pmf) $P_{Y,Z|X}$. The encoder-decoder pair can use the channel for communicating a random message $M$ as long as the interference $\mat{z}^n$ has a certain shape, measured in terms of its type $T_{\mat{z}^n}(z)$. For this purpose, they use a code.

\begin{definition}[Code]
An $(n,2^{nR})$-code for the scenario in Figure~\ref{fig:SingleUserScenario} consists of:
\begin{itemize}
\item a message set $\mathcal{M}\triangleq\{1,\ldots,\lceil2^{nR}\rceil\}$,
\item an encoding function $\mat{x}^n:\mathcal{M}\rightarrow\mathcal{X}^n$,
\item a decoding function $\hat{m}:\mathcal{Y}^n\rightarrow\mathcal{M}\cup\{e\}$.
\end{itemize}
\end{definition}
We assume that the message is uniformly distributed over the message set.

\begin{definition}[Achievability]
We say that the communication rate $R$ is achievable with interference type $G_Z$ if there exists a sequence of $(n,2^{nR})$-codes such that
\begin{align}
\lim_{n\rightarrow\infty}\Pr(\hat{M}\ne M)&=0,\label{eq:achiev_cond1}\\
\tvs{T_{\mat{Z}^n}-G_Z}&\rightarrow 0\text{~~in probability}\label{eq:achiev_cond2}
\end{align}
under the distribution induced by the codes.
\end{definition}

The \emph{communication-interference capacity region} $\mathcal{C}$ of the DMC $P_{Y,Z|X}$ is the closure of the set of all rate-interference type tuples $(R,G_Z)$ that are achievable. 

\begin{figure}[!t]
\centering
\resizebox{\columnwidth}{!}{
\tikzstyle{line} = [draw, ->]
\tikzstyle{block} = [rectangle, draw, text width=5em, text centered, rounded corners, minimum height=3em]
\tikzstyle{channel} = [rectangle, draw, text width=5em, text centered, minimum height=3em]
\tikzstyle{invisible} = [draw=none]
\begin{tikzpicture}[node distance = 4cm, auto]
    \node [block] (Encoder_1) {Encoder};
	\node [channel, right of=Encoder_1, minimum height=3cm] (ch_1) {$P_{Y,Z|X}$};
    \node [block, right of=ch_1, yshift=1cm] (Decoder_1) {Decoder};
       
    \path [line] (Encoder_1) -- ([xshift=.5cm]Encoder_1.east)[above] node{$\mat{X}^n$} -- (ch_1) ;
    \path [line] ([yshift=1cm]ch_1.east) -- ([yshift=1cm,xshift=.5cm]ch_1.east) [above right] node{$\mat{Y}^n$} -- (Decoder_1.west) ;
	\path [line] ([yshift=-1cm]ch_1.east) -- ([yshift=-1cm,xshift=.5cm]ch_1.east) [right] node{$\mat{Z}^n$};	
	\path [line] ([xshift=-.5cm]Encoder_1.west) [left] node{$M$} -- (Encoder_1.west);
	\path [line] (Decoder_1.east) -- ([xshift=.5cm]Decoder_1.east) [right] node{$\hat{M}$};		
\end{tikzpicture}
}
\caption{Scenario for single-user communication with interference constraint.}
\label{fig:SingleUserScenario}
\end{figure}
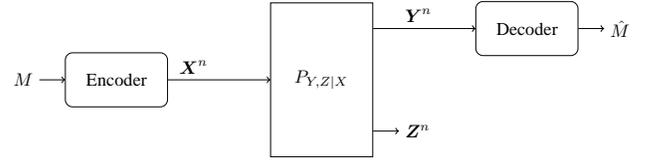

Our main result for the channel model in Figure~\ref{fig:SingleUserScenario} is a complete characterization of the communication-interference capacity region (Theorem~\ref{th:capacity_interference_pmf}). This region is convex and depends only on the marginals $P_{Y|X}$ and $P_{Z|X}$. Convexity is easily proven using standard time-sharing arguments. The dependency on the marginals also follows from well-known arguments (see e.g., \cite[Lemma~5.1]{GK11}).

\begin{theorem}\label{th:capacity_interference_pmf}
The communication-interference capacity region $\mathcal{C}$ of the DMC $P_{Y,Z|X}$ is the set of rate-interference type tuples $(R,G_Z)$ such that
\begin{align*}
R\le \max_{P_X\in\mathcal{P}}I(X;Y)
\end{align*}
where
\begin{align}
\mathcal{P}\triangleq\left\{ P_X:\sum_x P_XP_{Z|X}=G_Z\right\}.\label{eq:set_P}
\end{align}
\end{theorem}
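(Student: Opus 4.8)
The plan is to establish the two inclusions separately: achievability (every $(R,G_Z)$ with $R \le \max_{P_X \in \mathcal{P}} I(X;Y)$ lies in $\mathcal{C}$) and the converse (any achievable tuple satisfies this bound). For \textbf{achievability}, fix a $P_X \in \mathcal{P}$ achieving the maximum, so that $\sum_x P_X P_{Z|X} = G_Z$. I would use a random coding argument: generate $\lceil 2^{nR}\rceil$ codewords i.i.d.\ according to $\prod_{i=1}^n P_X(x_i)$, but restrict attention to the subcodebook of codewords that are $\epsilon$-typical with respect to $P_X$ (in the total-variation sense of Definition~\ref{def:typ_seq}); since each codeword is typical with high probability, this loses a negligible fraction. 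Standard joint-typicality decoding and the packing lemma give vanishing error probability as long as $R < I(X;Y)$, which handles \eqref{eq:achiev_cond1}. For the interference constraint \eqref{eq:achiev_cond2}, the key observation is that if the transmitted codeword $\mat{x}^n$ is $\epsilon$-typical for $P_X$, then because the channel $P_{Z|X}$ is memoryless, the output $\mat{Z}^n$ concentrates: its type $T_{\mat{Z}^n}$ is close in total variation to $\sum_x T_{\mat{x}^n}(x) P_{Z|X}(\cdot|x)$, which in turn is close to $\sum_x P_X(x) P_{Z|X}(\cdot|x) = G_Z$. A conditional typicality / concentration argument (e.g.\ via a union bound over the finitely many symbols $z$, using that conditioned on $\mat{x}^n$ the counts $\sum_i \ind{Z_i = z, x_i = x}$ are sums of independent indicators) shows $\Pr(\tvs{T_{\mat{Z}^n} - G_Z} \ge \epsilon) \to 0$. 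Taking $\epsilon \to 0$ along a suitable sequence gives achievability of the boundary point; points with strictly smaller $R$ follow trivially, and the closure is taken as in the definition of $\mathcal{C}$.

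For the \textbf{converse}, suppose $(R,G_Z)$ is achievable, witnessed by a sequence of $(n,2^{nR})$-codes. The interference constraint \eqref{eq:achiev_cond2} says $T_{\mat{Z}^n}$ converges in probability in total variation to $G_Z$. I would introduce a time-sharing random variable $Q$ uniform on $\{1,\dots,n\}$ independent of everything, and set $\bar{P}_X = \frac{1}{n}\sum_i P_{X_i}$ (the average single-letter input distribution induced by the code, i.e.\ $P_{X_Q}$). Then $\E{T_{\mat{Z}^n}} = \sum_x \bar{P}_X(x) P_{Z|X}(\cdot|x)$ exactly, so the concentration of $T_{\mat{Z}^n}$ around $G_Z$ forces $\tvs{\sum_x \bar{P}_X P_{Z|X} - G_Z} \to 0$; since the alphabets are finite we can pass to a convergent subsequence of $\bar{P}_X$ whose limit $P_X^\star$ satisfies $\sum_x P_X^\star P_{Z|X} = G_Z$ exactly, i.e.\ $P_X^\star \in \mathcal{P}$. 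On the communication side, Fano's inequality together with \eqref{eq:achiev_cond1} gives the standard chain $nR \le I(M;\mat{Y}^n) + n\epsilon_n \le \sum_i I(X_i;Y_i) + n\epsilon_n = n I(X_Q;Y_Q\mid Q) + n\epsilon_n \le n I(X_Q;Y_Q) + n\epsilon_n$, using the memoryless property of $P_{Y|X}$ and that conditioning on $Q$ can only increase... here one uses convexity of mutual information in the channel / the data-processing structure to drop $Q$; in fact $I(X_Q; Y_Q \mid Q) \le I(X_Q;Y_Q)$ need not hold, so I would instead keep $I(X_Q;Y_Q\mid Q)$ and bound it by $\max_{P_X} I(X;Y)$ evaluated at each $P_{X_i}$, then note each $P_{X_i}$ need not be in $\mathcal{P}$ — the cleaner route is to observe $\frac{1}{n}\sum_i I(X_i;Y_i) \le I(X_{\bar P};Y_{\bar P})$ by concavity of $P_X \mapsto I(X;Y)$, giving $R \le I(X_{\bar P}; Y_{\bar P}) + \epsilon_n$ with $X_{\bar P} \sim \bar P_X$.

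Combining, along the chosen subsequence $R \le I(X_{\bar P_X}; Y) + \epsilon_n$ and $\bar P_X \to P_X^\star \in \mathcal{P}$; continuity of $P_X \mapsto I(X;Y)$ on the simplex (finite alphabets) then yields $R \le I(X^\star;Y) \le \max_{P_X \in \mathcal{P}} I(X;Y)$, which is the claimed bound. The \textbf{main obstacle} I anticipate is the interplay between the two constraints in the converse: the induced input distributions $P_{X_i}$ (or their average) need not lie in $\mathcal{P}$ for finite $n$ — they only approach it asymptotically through \eqref{eq:achiev_cond2} — so the argument must carefully combine the Fano bound, the concavity of mutual information in the input distribution, and a compactness/continuity argument to take limits, ensuring the limiting distribution is \emph{exactly} feasible. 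On the achievability side, the analogous subtlety is ensuring the interference type concentrates uniformly over the codebook (not just for a fixed codeword), which is why restricting to the typical subcodebook before analyzing the channel output is essential.
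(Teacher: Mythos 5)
Your proposal is correct, and while the achievability half coincides with the paper's (which simply invokes Shannon's coding theorem, with the interference part handled exactly as you describe via conditional typicality of $\mat{Z}^n$ given a typical codeword), your converse takes a genuinely different and more economical route. The paper first proves a strong structural lemma (its Lemma~\ref{lemma:conv_output_input}): using a finite cover of the set of input types far from $\mathcal{P}$ and disjointness of the corresponding output typical sets, it shows that $\Pr(\mat{X}^n\notin\mathscr{T}_\epsilon^{(n)}(\mathcal{P}))\to0$, and only then deduces $\tvs{\E{T_{\mat{X}^n}}-P_X^{(n)}}\to0$ with $P_X^{(n)}\in\mathcal{P}$. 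You bypass all of this by working directly with expectations: convergence in probability of the bounded quantity $\tvs{T_{\mat{Z}^n}-G_Z}$ gives $\E{T_{\mat{Z}^n}}\to G_Z$, the memoryless channel gives the exact identity $\E{T_{\mat{Z}^n}}=\sum_x\E{T_{\mat{X}^n}}P_{Z|X}$, and compactness of the simplex plus continuity of this linear map and of $P_X\mapsto I(X;Y)$ finish the argument along a convergent subsequence. This is precisely the content of the paper's Lemma~\ref{lemma:p_not_empty} (used there only to show $\mathcal{P}\ne\emptyset$) pushed through to the end of the converse; it suffices because the Fano bound is evaluated at exactly $P_{X_Q}=\E{T_{\mat{X}^n}}$, so only the behavior of the \emph{mean} input type matters. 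What the paper's heavier lemma buys is the stronger fact that the codeword types themselves concentrate near $\mathcal{P}$, which is of independent interest but not needed for the rate bound. One small remark: your hesitation about $I(X_Q;Y_Q\mid Q)\le I(X_Q;Y_Q)$ is unnecessary here --- since the channel is the same at every time instant, $Q-X_Q-Y_Q$ is a Markov chain and the inequality does hold (this is how the paper argues); your substitute via concavity of $I(X;Y)$ in $P_X$ is the same fact in different clothing, so there is no gap.
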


Observe that this result agrees with our basic understanding of communication and coordination. In particular, the capacity expression is reminiscent of that for the point-to-point channel but the maximization is over the restricted set $\mathcal{P}$ of input distributions $P_X$ that induce the desired interference type $G_Z$. We will refer to the set $\mathcal{P}$ defined in \eqref{eq:set_P} as the \emph{pre-image} of $G_Z$. It is simple to show that the pre-image of a given $G_Z$ is a closed and convex set.

The result in Theorem~\ref{th:capacity_interference_pmf} is different from those involving constraints on the channel output in \cite[Sec.~29]{Sha48} and \cite{Gas07}. For example, satisfying an interference power constraint does not directly imply convergence of the type of the interference in the sense defined above. In contrast, convergence of the type ensures that the power constraint is satisfied. However, our characterization of the interference in terms of its type does not extend to continuous alphabets. 


In the remainder of this section we will prove Theorem~\ref{th:capacity_interference_pmf}. For this purpose, we first introduce the following auxiliary results (Lemmas~\ref{lemma:p_not_empty}-\ref{lemma:conv_output_input}).
%

\begin{lemma}\label{lemma:p_not_empty}
The interference type $T_{\mat{Z}^n}$ induced by a sequence of $(n,2^{nR})$-codes can only converge in probability to distributions $G_Z$ with non-empty pre-image, that is, $\mathcal{P}\ne\emptyset$.
\end{lemma}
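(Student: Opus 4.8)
The plan is to argue by contradiction: suppose a sequence of $(n,2^{nR})$-codes induces an interference type $T_{\mat{Z}^n}$ that converges in probability in total variation to some $G_Z$ whose pre-image $\mathcal{P}$ is empty, and derive a contradiction with the defining property of $\mathcal{P}=\emptyset$. The key observation is that every codeword $\mat{x}^n(m)$ has a well-defined type $T_{\mat{x}^n(m)}$, which is an honest probability distribution on $\mathcal{X}$, and that $T_{\mat{x}^n(m)}$ induces a distribution on $\mathcal{Z}$ through the channel, namely $\sum_x T_{\mat{x}^n(m)}(x)P_{Z|X}(z|x)$. I would first establish that the \emph{expected} interference type, conditioned on the transmitted message $m$, is close to this induced distribution: by linearity of expectation and the memoryless property of the channel, $\Es{T_{\mat{Z}^n}(z)\mid M=m}=\sum_x T_{\mat{x}^n(m)}(x)P_{Z|X}(z|x)$ exactly.

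Next I would show that $T_{\mat{Z}^n}$ concentrates around this conditional mean. Conditioned on $M=m$, the coordinates $Z_1,\dots,Z_n$ are independent (though not identically distributed), so for each fixed $z$ the quantity $T_{\mat{Z}^n}(z)=\frac1n\sum_i\ind{Z_i=z}$ is an average of independent bounded random variables; a Hoeffding-type bound gives $\Pr(\abs{T_{\mat{Z}^n}(z)-\Es{T_{\mat{Z}^n}(z)\mid M=m}}\ge\delta\mid M=m)\le 2e^{-2n\delta^2}$, and summing over the finitely many $z$ and over $m$ (using that $M$ is uniform, or simply the law of total probability) shows that with probability tending to $1$, $T_{\mat{Z}^n}$ is within total variation $\abss{\mathcal{Z}}\delta$ of the distribution $\sum_x T_{\mat{X}^n}(x)P_{Z|X}(z|x)$ induced by the \emph{random} codeword type $T_{\mat{X}^n}$. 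Combining this with the assumed convergence $\tvs{T_{\mat{Z}^n}-G_Z}\to 0$ in probability, the triangle inequality yields that $\tvs{\sum_x T_{\mat{X}^n}(x)P_{Z|X}-G_Z}\to 0$ in probability as well.

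Finally, since $\abss{\mathcal{X}}<\infty$, the type $T_{\mat{X}^n}$ lives in the compact probability simplex $\Delta(\mathcal{X})$, and the map $P_X\mapsto\sum_x P_XP_{Z|X}$ is continuous; its image is therefore a compact (hence closed) subset of $\Delta(\mathcal{Z})$. If $\mathcal{P}=\emptyset$, then $G_Z$ is not in this image, so there is a strictly positive distance $\eta>0$ between $G_Z$ and the image. But then $\tvs{\sum_x T_{\mat{X}^n}(x)P_{Z|X}-G_Z}\ge\eta$ deterministically for every $n$, contradicting the convergence in probability established above. Hence $\mathcal{P}\ne\emptyset$.

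The main obstacle I anticipate is the bookkeeping in the concentration step: one must handle the conditioning on $M$ correctly (the codewords are deterministic given $m$, so the only randomness in $\mat{Z}^n$ given $M=m$ comes from the channel, which is exactly what makes the coordinates independent), and one must be careful that the "random codeword type" $T_{\mat{X}^n}$ is itself a random object depending on $M$, so the final total-variation statement is a statement about a random distribution. Everything else — continuity of the channel map, compactness of the simplex, the triangle inequality — is routine.
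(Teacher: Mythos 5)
Your proposal is correct, and it takes a genuinely different route from the paper's proof. The paper never touches the random types directly: it first observes that convergence in probability of the bounded quantity $\tvs{T_{\mat{Z}^n}-G_Z}$ forces $\Es{\tvs{T_{\mat{Z}^n}-G_Z}}\rightarrow0$, hence $\Es{T_{\mat{Z}^n}}\rightarrow G_Z$ by Jensen; it then uses the exact identity $\Es{T_{\mat{Z}^n}}=\sum_x\Es{T_{\mat{X}^n}}P_{Z|X}$, extracts a convergent subsequence of the expected input types $\Es{T_{\mat{X}^n}}$ via Bolzano--Weierstrass, and concludes by continuity that the limit $\hat{P}_X$ satisfies $\sum_x\hat{P}_XP_{Z|X}=G_Z$, exhibiting an element of $\mathcal{P}$ directly. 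You instead keep the randomness alive: a Hoeffding bound (conditioned on $M=m$, where the $Z_i$ are independent) shows that $T_{\mat{Z}^n}$ concentrates on the random induced distribution $\sum_x T_{\mat{X}^n}(x)P_{Z|X}$, so that this random distribution itself converges to $G_Z$ in probability, and you then argue by contradiction using the strictly positive distance from $G_Z$ to the compact image of the simplex when $\mathcal{P}=\emptyset$. Both arguments are sound and both ultimately rest on compactness of $\Delta(\mathcal{X})$ and continuity of the channel map. The paper's version is lighter --- only first moments, no concentration inequality --- while yours is doing strictly more work than the lemma requires; the payoff is that your intermediate conclusion (convergence in probability of the induced output distribution, hence of the input type to the pre-image set) is essentially step (i) of the much harder Lemma~\ref{lemma:conv_output_input}, which the paper has to prove separately by a covering argument.
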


\begin{proof}
First, observe that convergence in probability
\begin{align*}
\tvs{T_{\mat{Z}^n}-G_Z}\rightarrow0
\end{align*}
implies that 
\begin{align*}
\Es{\tvs{T_{\mat{Z}^n}-G_Z}}\rightarrow0
\end{align*}
because the total variation is bounded. In turn, this means that
\begin{align*}
\Es{T_{\mat{Z}^n}}\rightarrow G_Z
\end{align*} 
by a simple application of Jensen's inequality. Now, note that
\begin{align*}
\Es{T_{\mat{Z}^n}}&=\sum_{x}\Es{T_{\mat{X}^n,\mat{Z}^n}}\\
&=\sum_{x}\Es{T_{\mat{X}^n}}P_{Z|X}\\
&=f(\Es{T_{\mat{X}^n}}),
\end{align*}
where $f:\mathcal{X}\rightarrow\mathcal{Z}$ is a continuous function and $\Es{T_{\mat{X}^n}}$ is a bounded sequence of probability distributions on $\mathcal{X}$. Thus, by the Bolzano-Weierstrass theorem \cite[Theorem~3.6]{Rud76}, the sequence $\Es{T_{\mat{X}^n}}$ has a convergent subsequence, which we denote by $\bar{P}^{(n)}_X$. That is,
\begin{align*}
\bar{P}^{(n)}_X\rightarrow \hat{P}_X,
\end{align*}
where $\hat{P}_X$ is the corresponding limit (i.e., a probability distribution on $\mathcal{X}$). By convergence $\Es{T_{\mat{Z}^n}}\rightarrow G_Z$ and by continuity of the function $f$, we establish that
\begin{align*}
\lim_{n\rightarrow\infty}f(\Es{T_{\mat{X}^n}})&=\lim_{n\rightarrow\infty}f(\bar{P}^{(n)}_X)\\
&=f(\hat{P}_X)\\
&=G_Z.
\end{align*}
This means that $\hat{P}_X(x)\in\mathcal{P}$. Therefore, $\mathcal{P}\ne\emptyset$.
\end{proof}

\begin{lemma}\label{lemma:positive_distance2}
Let $G_Z$ be given and have pre-image $\mathcal{P}$ such that $\mathcal{P}\ne\emptyset$ and $\mathcal{P}^c\ne\emptyset$.
Consider the sets
\begin{align*}
\tilde{\mathcal{P}}_\epsilon&\triangleq\{\tilde{P}_X: \tvs{\tilde{P}_X-P_X}\ge\epsilon\text{ for all }P_X\in\mathcal{P}\},\\
\tilde{\mathcal{G}}_\epsilon&\triangleq\left\{\tilde{G}_Z: \sum_x P_{Z|X}\tilde{P}_X=\tilde{G}_Z\text{ for some }\tilde{P}_X\in\tilde{\mathcal{P}}_\epsilon\right\},
\end{align*}
defined for any fixed $\epsilon>0$ such that $\tilde{\mathcal{P}}_\epsilon\ne\emptyset$. Let
\begin{align*}
d^\star=\inf_{\tilde{G}_Z\in\tilde{\mathcal{G}}_\epsilon}\tvs{G_Z-\tilde{G}_Z}.
\end{align*}
Then, we have that $d^\star>0$.
\end{lemma}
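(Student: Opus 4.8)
The plan is to argue by contradiction, exploiting compactness to convert "infimum equals zero" into the existence of an actual minimizer that violates the hypothesis $\tilde{\mathcal{P}}_\epsilon \cap \mathcal{P} = \emptyset$. First I would observe that the set $\tilde{\mathcal{P}}_\epsilon$ is closed: it is an intersection (over all $P_X \in \mathcal{P}$) of closed half-space-like sets $\{\tilde P_X : \tvs{\tilde P_X - P_X} \ge \epsilon\}$, each of which is closed because total variation is continuous in its argument. Being a closed subset of the probability simplex on $\mathcal{X}$, which is compact, $\tilde{\mathcal{P}}_\epsilon$ is itself compact. Next I would note that the linear map $\tilde P_X \mapsto \sum_x P_{Z|X}\tilde P_X$ is continuous, so its image $\tilde{\mathcal{G}}_\epsilon$ is the continuous image of a compact set, hence compact (in particular closed).

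Now suppose, for contradiction, that $d^\star = 0$. Then there is a sequence $\tilde G_Z^{(k)} \in \tilde{\mathcal{G}}_\epsilon$ with $\tvs{G_Z - \tilde G_Z^{(k)}} \to 0$, i.e. $\tilde G_Z^{(k)} \to G_Z$. Since $\tilde{\mathcal{G}}_\epsilon$ is closed, the limit $G_Z$ belongs to $\tilde{\mathcal{G}}_\epsilon$. By definition of $\tilde{\mathcal{G}}_\epsilon$, this means there exists $\tilde P_X \in \tilde{\mathcal{P}}_\epsilon$ with $\sum_x P_{Z|X}\tilde P_X = G_Z$. But that last equation says precisely that $\tilde P_X$ lies in the pre-image $\mathcal{P}$ of $G_Z$. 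So $\tilde P_X \in \mathcal{P} \cap \tilde{\mathcal{P}}_\epsilon$, which contradicts the defining property of $\tilde{\mathcal{P}}_\epsilon$ (every element of $\tilde{\mathcal{P}}_\epsilon$ is at total-variation distance at least $\epsilon > 0$ from every element of $\mathcal{P}$, so it cannot itself be in $\mathcal{P}$). Hence $d^\star > 0$.

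Alternatively, one can avoid the contradiction framing: $G_Z \mapsto \tvs{G_Z - \tilde G_Z}$ is a continuous function on the compact set $\tilde{\mathcal{G}}_\epsilon$, so the infimum is attained at some $\tilde G_Z^\star$; the attained value is then strictly positive because $G_Z \notin \tilde{\mathcal{G}}_\epsilon$ (if it were, we would again get a point of $\mathcal{P}$ inside $\tilde{\mathcal{P}}_\epsilon$), and the distance from a point to a closed set not containing it is positive. I expect the only subtle point to be the verification that $\tilde{\mathcal{P}}_\epsilon$ is closed and that therefore $\tilde{\mathcal{G}}_\epsilon$ is closed — everything else is a routine compactness argument. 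It is worth remarking that the hypothesis $\mathcal{P} \ne \emptyset$ is what makes $\tilde{\mathcal{P}}_\epsilon$ meaningfully constrained (the intersection is over a nonempty index set), and $\mathcal{P}^c \ne \emptyset$ together with the assumption $\tilde{\mathcal{P}}_\epsilon \ne \emptyset$ guarantees we are not taking an infimum over the empty set.
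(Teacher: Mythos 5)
Your proof is correct and follows essentially the same route as the paper's: both arguments use compactness of $\tilde{\mathcal{P}}_\epsilon$, continuity of the map $\tilde{P}_X\mapsto\sum_x P_{Z|X}\tilde{P}_X$ to get compactness of $\tilde{\mathcal{G}}_\epsilon$, and then the extreme value theorem (equivalently, closedness of $\tilde{\mathcal{G}}_\epsilon$) to conclude that $d^\star=0$ would force $G_Z\in\tilde{\mathcal{G}}_\epsilon$ and hence a point of $\tilde{\mathcal{P}}_\epsilon$ inside $\mathcal{P}$, a contradiction. Your explicit verification that $\tilde{\mathcal{P}}_\epsilon$ is closed is a small refinement the paper leaves implicit, but the argument is the same.
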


\begin{proof}
Assume that $d^\star=0$. Note that $\tilde{\mathcal{P}}_\epsilon$ is a compact set and that $\tilde{G}_Z$ is a continuous function of $\tilde{P}_X$. Therefore,  $\tilde{\mathcal{G}}_\epsilon$ is a compact set, too. Note also that $\tvs{G_Z-\tilde{G}_Z}$ is a continuous function of $\tilde{G}_Z$. Thus, by Weierstrass' extreme value theorem \cite[Theorem~4.16]{Rud76}, there must exist some $\tilde{G}_Z\in\tilde{\mathcal{G}}_\epsilon$ (and hence some $\tilde{P}_X\in\tilde{\mathcal{P}}_\epsilon$) such that
\begin{align*}
\tvs{G_Z-\tilde{G}_Z}=0.
\end{align*}
That is, $G_Z=\tilde{G}_Z$. However, this would imply that $\tilde{P}_X\in\mathcal{P}$, which is a contradiction. Thus, we must have $d^\star>0$.
\end{proof}

\begin{lemma}\label{lemma:disjoint_sets}
Let $\epsilon>0$ and consider two arbitrary pmfs $Q_Z$ and $\tilde{Q}_Z$ defined on $\mathcal{Z}$ with typical sets $\mathcal{T}_{\epsilon}^{(n)}(Q_Z)$ and $\mathcal{T}_{\epsilon}^{(n)}(\tilde{Q}_Z)$, respectively. If the total variation between the pmfs satisfies $\tvs{Q_Z-\tilde{Q}_Z}>2\epsilon$ then the two typical sets are disjoint. That is, $\mathcal{T}_{\epsilon}^{(n)}(Q_Z)\cap\mathcal{T}_{\epsilon}^{(n)}(\tilde{Q}_Z)=\emptyset$.
\end{lemma}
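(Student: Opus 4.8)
The plan is to argue by contraposition: I will assume that the two typical sets have a common element and deduce that $\tvs{Q_Z-\tilde{Q}_Z}\le 2\epsilon$. So suppose $\mat{z}^n\in\mathcal{T}_{\epsilon}^{(n)}(Q_Z)\cap\mathcal{T}_{\epsilon}^{(n)}(\tilde{Q}_Z)$. By Definition~\ref{def:typ_seq}, this means $\tvs{T_{\mat{z}^n}-Q_Z}<\epsilon$ and $\tvs{T_{\mat{z}^n}-\tilde{Q}_Z}<\epsilon$ simultaneously, where $T_{\mat{z}^n}$ is the type of $\mat{z}^n$ (a genuine pmf on $\mathcal{Z}$).

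The key step is the triangle inequality for total variation. Since total variation is (up to the factor $\tfrac12$) the $\ell_1$ distance between pmfs, it satisfies the triangle inequality, so
\begin{align*}
\tvs{Q_Z-\tilde{Q}_Z}\le\tvs{Q_Z-T_{\mat{z}^n}}+\tvs{T_{\mat{z}^n}-\tilde{Q}_Z}<\epsilon+\epsilon=2\epsilon.
\end{align*}
This contradicts the hypothesis $\tvs{Q_Z-\tilde{Q}_Z}>2\epsilon$. Hence no such $\mat{z}^n$ exists, i.e., $\mathcal{T}_{\epsilon}^{(n)}(Q_Z)\cap\mathcal{T}_{\epsilon}^{(n)}(\tilde{Q}_Z)=\emptyset$.

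There is essentially no obstacle here; the only thing to be careful about is noting explicitly that the triangle inequality is legitimate (total variation is a metric on the simplex of pmfs, inherited from the $\ell_1$ norm), and that the inserted intermediate point $T_{\mat{z}^n}$ is itself a valid pmf so that Definition~\ref{def:typ_seq} applies to both parts of the intersection. Everything else is a one-line chain of inequalities.
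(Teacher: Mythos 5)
Your proof is correct and is essentially the same as the paper's: both hinge on the triangle inequality for total variation applied to the type $T_{\mat{z}^n}$ as the intermediate point. The paper merely phrases it directly (taking $\mat{z}^n\in\mathcal{T}_{\epsilon}^{(n)}(Q_Z)$ and using the reverse triangle inequality to show $\tvs{\tilde{Q}_Z-T_{\mat{z}^n}}>\epsilon$), whereas you argue by contraposition from a hypothetical common element; the two are logically interchangeable.
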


\begin{proof}
Let $\mat{z}^n\in\mathcal{T}_{\epsilon}^{(n)}(Q_Z)$, that is,
\begin{align*}
\tvs{Q_Z-T_{\mat{z}^n}}<\epsilon.
\end{align*}
Then
\begin{align*}
\tvs{\tilde{Q}_Z-T_{\mat{z}^n}}&=\tvs{\tilde{Q}_Z-Q_Z+Q_Z-T_{\mat{z}^n}}\\
&\ge\tvs{\tilde{Q}_Z-Q_Z}-\tvs{Q_Z-T_{\mat{z}^n}}\\
&>2\epsilon-\epsilon.
\end{align*}
Thus $\mat{z}^n\notin\mathcal{T}_{\epsilon}^{(n)}(\tilde{Q}_Z)$ and $\mathcal{T}_{\epsilon}^{(n)}(Q_Z)\cap\mathcal{T}_{\epsilon}^{(n)}(\tilde{Q}_Z)=\emptyset$.
\end{proof}

\begin{lemma}\label{lemma:conv_output_input}
Let $G_Z$ be fixed and have pre-image $\mathcal{P}$. If a sequence of $(n,2^{nR})$-codes induces an interference type $T_{\mat{Z	}^n}$ such that 
\begin{align}
\tvs{T_{\mat{Z}^n}-G_Z}&\rightarrow 0\text{~~in probability,}\label{eq:conv_output_type}
\end{align}
then the expectation of the type of the codewords $\E{T_{\mat{X}^n}}$ satisfies
\begin{align}
\tvs{\E{T_{\mat{X}^n}}-P^{(n)}_X}&\rightarrow 0\label{eq:conv_input_etype}
\end{align}
for some sequence $P_X^{(n)}$ with $P_X^{(n)}\in\mathcal{P}$ for all $n$.
\end{lemma}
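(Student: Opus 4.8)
The plan is to prove the slightly stronger statement that the total variation distance from $\E{T_{\mat{X}^n}}$ to the \emph{whole set} $\mathcal{P}$ vanishes, and then take $P_X^{(n)}$ to be a point of $\mathcal{P}$ realizing that distance. By Lemma~\ref{lemma:p_not_empty}, the hypothesis \eqref{eq:conv_output_type} forces $\mathcal{P}\ne\emptyset$; and since $\mathcal{P}$ is a closed subset of the compact probability simplex on $\mathcal{X}$, it is itself compact. Hence the map $P_X\mapsto\tvs{\E{T_{\mat{X}^n}}-P_X}$ attains its minimum on $\mathcal{P}$ (Weierstrass, \cite[Theorem~4.16]{Rud76}), say at $P_X^{(n)}\in\mathcal{P}$, and we set $\delta_n\triangleq\tvs{\E{T_{\mat{X}^n}}-P_X^{(n)}}$. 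It then suffices to show $\delta_n\to 0$, which is exactly \eqref{eq:conv_input_etype}.

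Next I would recycle the opening of the proof of Lemma~\ref{lemma:p_not_empty}: since total variation is bounded, \eqref{eq:conv_output_type} implies $\Es{\tvs{T_{\mat{Z}^n}-G_Z}}\to 0$, and Jensen's inequality gives $\E{T_{\mat{Z}^n}}\to G_Z$. Moreover $\E{T_{\mat{Z}^n}}=f(\E{T_{\mat{X}^n}})$, where $f$ is the linear (hence continuous) marginalization map $P_X\mapsto\sum_x P_X P_{Z|X}$ used there. So $f(\E{T_{\mat{X}^n}})\to G_Z$.

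I would then argue by contradiction. Suppose $\delta_n\not\to 0$; then there are $\epsilon>0$ and a subsequence along which $\delta_n\ge\epsilon$, i.e., $\E{T_{\mat{X}^n}}\in\tilde{\mathcal{P}}_\epsilon$ for all $n$ in the subsequence. In particular $\tilde{\mathcal{P}}_\epsilon\ne\emptyset$, hence also $\mathcal{P}^c\ne\emptyset$, so Lemma~\ref{lemma:positive_distance2} applies with this $\epsilon$ and yields $d^\star>0$. Since $f(\E{T_{\mat{X}^n}})\in\tilde{\mathcal{G}}_\epsilon$ along the subsequence, we obtain $\tvs{f(\E{T_{\mat{X}^n}})-G_Z}\ge d^\star>0$ for all such $n$, contradicting $f(\E{T_{\mat{X}^n}})\to G_Z$. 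Hence $\delta_n\to 0$. The degenerate case $\mathcal{P}^c=\emptyset$ is disposed of directly: then $\tilde{\mathcal{P}}_\epsilon=\emptyset$ for every $\epsilon>0$, so $\E{T_{\mat{X}^n}}\in\mathcal{P}$ and $\delta_n=0$ for all $n$.

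Essentially all the analytic content has already been front-loaded into Lemmas~\ref{lemma:p_not_empty} and~\ref{lemma:positive_distance2}, so I do not anticipate a serious obstacle; the points needing care are (i) phrasing the conclusion as convergence of $\E{T_{\mat{X}^n}}$ to the \emph{set} $\mathcal{P}$ rather than to a fixed target, (ii) checking via compactness of $\mathcal{P}$ that the minimizer $P_X^{(n)}\in\mathcal{P}$ genuinely exists, so that \eqref{eq:conv_input_etype} is stated with an honest sequence, and (iii) separating off the case $\mathcal{P}^c=\emptyset$, in which $\tilde{\mathcal{P}}_\epsilon$ is empty and Lemma~\ref{lemma:positive_distance2} cannot be invoked.
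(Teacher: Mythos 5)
Your proof is correct, and it takes a genuinely different — and considerably shorter — route than the paper's. The paper proves the lemma in two steps: it first establishes the concentration statement $\lim_{n\rightarrow\infty}\Pr(\mat{X}^n\notin\mathscr{T}_\epsilon^{(n)}(\mathcal{P}))=0$ by a contradiction argument that covers $\tilde{\mathcal{P}}_{\epsilon_x}$ with finitely many total-variation balls, invokes the conditional typicality lemma on each piece, and uses Lemmas~\ref{lemma:positive_distance2} and~\ref{lemma:disjoint_sets} to keep the associated output-typical sets disjoint from $\mathcal{T}_\epsilon^{(n)}(G_Z)$; it then converts this into \eqref{eq:conv_input_etype} by splitting $\E{T_{\mat{X}^n}}$ over the event $\{\mat{X}^n\in\mathscr{T}_\epsilon^{(n)}(\mathcal{P})\}$ and its complement. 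You bypass the concentration step entirely: since the lemma only concerns the \emph{expectation} of the input type, the identity $\E{T_{\mat{Z}^n}}=f(\E{T_{\mat{X}^n}})$ (linearity of the marginalization map, already established inside the proof of Lemma~\ref{lemma:p_not_empty}) together with $\E{T_{\mat{Z}^n}}\rightarrow G_Z$ and the uniform separation $d^\star>0$ from Lemma~\ref{lemma:positive_distance2} yields the conclusion directly, with no need for the covering, for Lemma~\ref{lemma:disjoint_sets}, or for the conditional typicality lemma. Your bookkeeping is also sound: compactness of $\mathcal{P}$ guarantees that the minimizer $P_X^{(n)}$ exists, membership of $\E{T_{\mat{X}^n}}$ in $\tilde{\mathcal{P}}_\epsilon$ along the offending subsequence shows $\tilde{\mathcal{P}}_\epsilon\ne\emptyset$ (and hence $\mathcal{P}^c\ne\emptyset$) so that Lemma~\ref{lemma:positive_distance2} applies, and the case $\mathcal{P}^c=\emptyset$ is disposed of separately, exactly as in the paper. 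What the paper's longer argument buys is the strictly stronger intermediate fact that the random type $T_{\mat{X}^n}$ itself converges in probability to the set $\mathcal{P}$, not merely its expectation; but that stronger fact is not used in the converse of Theorem~\ref{th:capacity_interference_pmf}, so your argument suffices for everything the lemma is needed for.
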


\begin{proof}
First, note that $\mathcal{P}\ne\emptyset$ by virtue of Lemma~\ref{lemma:p_not_empty}. Moreover, if $\mathcal{P}$ is equal to the whole simplex of probability distributions on $\mathcal{X}$ (i.e., $\mathcal{P}^c=\emptyset$) the proof is trivial. We prove the lemma for the case $\mathcal{P}\ne\emptyset,\mathcal{P}^c\ne\emptyset$ in two steps. i) First, we show that \eqref{eq:conv_output_type} implies that $\lim_{n\rightarrow\infty}\Pr(\mat{X}^n\notin\mathscr{T}_\epsilon^{(n)}(\mathcal{P}))=0$ for any $\epsilon>0$, where
\begin{align*}
\mathscr{T}_{\epsilon}^{(n)}(\mathcal{P})\triangleq\{\mat{x}^n: \tvs{T_{\mat{x}^n}-P_X}<\epsilon\text{ for some }P_X\in\mathcal{P}\}.
\end{align*}
(The set $\mathscr{T}_{\epsilon}^{(n)}$ is a straightforward generalization of the typical set $\mathcal{T}_{\epsilon}^{(n)}$.) ii) Then, we show that this implies \eqref{eq:conv_input_etype}. 

i) We prove the first step by contradiction. Assume that \eqref{eq:conv_output_type} is satisfied by some sequence of $(n,2^{nR})$-codes with distribution $P_{\mat{X}^n}$ for which there exist $\delta>0$ and $\epsilon_x>0$ such that 
\begin{align*}
\delta\le\limsup_{n\rightarrow\infty}\Pr(\mat{X}^n\notin\mathscr{T}_{\epsilon_x}^{(n)}(\mathcal{P})).
\end{align*}
Note that for every $\epsilon_x'$ such that $0<\epsilon_x'<\epsilon_x$ we have that $\tilde{P}_{\epsilon_x}\subseteq\tilde{P}_{\epsilon_x'}$ and this implies that
$\Pr(\mat{X}^n\notin\mathscr{T}_{\epsilon_x}^{(n)}(\mathcal{P}))\le\Pr(\mat{X}^n\notin\mathscr{T}_{\epsilon_x'}^{(n)}(\mathcal{P}))$.
For our purposes, it will be more convenient to write our expressions in terms of 
\begin{align*}
\tilde{\mathcal{P}}_{\epsilon_x}&\triangleq\{\tilde{P}_X:\tvs{\tilde{P}_X-P_X}\ge\epsilon_x\text{ for all }P_X\in\mathcal{P}\}.
\end{align*}
With this notation, the set $\{\mat{x}^n\notin\mathscr{T}_{\epsilon_x}^{(n)}(\mathcal{P})\}$ is equivalent to $\{\mat{x}^n:T_{\mat{x}^n}\in\tilde{\mathcal{P}}_{\epsilon_x}\}$. Observe that $\tilde{\mathcal{P}}_{\epsilon_x}\ne\emptyset$ for sufficiently small $\epsilon_x$ because $\tilde{\mathcal{P}}_{\epsilon_x}\subseteq\mathcal{P}^c$ and $\mathcal{P}^c$ is a set with non-empty interior.
Thus, without loss of generality, we assume that $\tilde{\mathcal{P}}_{\epsilon_x}\ne\emptyset$.

Now, we define the following finite cover $\mathcal{Q}_{\epsilon_c}$ of the set $\tilde{\mathcal{P}}_{\epsilon_x}$.  Given $\epsilon_c$ such that $0<\epsilon_c<\epsilon_x$, the set $\mathcal{Q}_{\epsilon_c}$ is a \emph{finite} set of distributions on $\mathcal{X}$ such that for every $\tilde{P}_X\in\tilde{\mathcal{P}}_{\epsilon_x}$ there exists some $P_X\in\mathcal{Q}_{\epsilon_c}$ with
\begin{align*}
\tvs{P_X-\tilde{P}_X}<\epsilon_c.
\end{align*}
Such a cover exists because the set $\tilde{\mathcal{P}}_{\epsilon_x}$ is compact. In fact, there exist more than one set with these properties. For convenience, we choose one (any) such set with the smallest possible cardinality. Thus, any distribution in $\tilde{\mathcal{P}}_{\epsilon_x}$ can be approximated by an element in the finite set $\mathcal{Q}_{\epsilon_c}$ with an error in terms of the total variation not exceeding $\epsilon_c$. Fix an arbitrary ordering of the elements in $\mathcal{Q}_{\epsilon_c}$
\begin{align*}
\mathcal{Q}_{\epsilon_c}=\{Q_{X,1},Q_{X,2},\ldots Q_{X,\abss{\mathcal{Q}_{\epsilon_c}}}\},
\end{align*}
and let 
\begin{align*}
\tilde{\mathcal{Q}}_{i}\triangleq\{\tilde{P}_X\in\tilde{\mathcal{P}}_{\epsilon_x}:\tvs{Q_{X,i}-\tilde{P}_X}<\epsilon_c\}
\end{align*}
for $i\in\{1,\ldots,\abs{\mathcal{Q}_{\epsilon_c}}\}$. To avoid the possibility that $\tilde{P}_X\in\tilde{\mathcal{Q}}_{i}$ and $\tilde{P}_X\in\tilde{\mathcal{Q}}_{j}$ for $i\ne j$, we define the following disjoint sets
\begin{align*}
\mathcal{Q}_1&\triangleq\tilde{\mathcal{Q}}_{1},\\
\mathcal{Q}_i&\triangleq\tilde{\mathcal{Q}}_{i}\backslash\bigcup_{j=1}^{i-1}\tilde{\mathcal{Q}}_{j}
\end{align*}
for $i\in\{2,\ldots,\abs{\mathcal{Q}_{\epsilon_c}}\}$. Observe that $\cup_i \mathcal{Q}_{i}=\tilde{\mathcal{P}}_{\epsilon_x}$.
Thus, for each $\mat{x}^n\notin\mathscr{T}_{\epsilon_x}^{(n)}(\mathcal{P})$ its type $T_{\mat{x}^n}$ satisfies $T_{\mat{x}^n}\in\mathcal{Q}_{i}$ for exactly one $i\in\mathcal\{1,\ldots,\abss{\mathcal{Q}_{\epsilon_c}}\}$. Using this covering into disjoints sets, we write
\begin{align*}
\sum_{\mat{x}^n\notin\mathscr{T}_{\epsilon_x}^{(n)}(\mathcal{P})}P_{\mat{X}^n}(\mat{x}^n)
&=
\sum_{i=1}^{\abs{\mathcal{Q}_{\epsilon_c}}} \sum_{\mat{x}^n:T_{\mat{x}^n}\in{\mathcal{Q}_i}}P_{\mat{X}^n}(\mat{x}^n).
\end{align*}

Now, for arbitrary $\epsilon>0$, write
\begin{align}
&\sum_{\mat{z}^n\notin\mathcal{T}_{\epsilon}^{(n)}(G_Z)}\hspace{-.5cm}P_{\mat{Z}^n}(\mat{z}^n)=\sum_{\mat{x}^n}P_{\mat{X}^n}(\mat{x}^n)\hspace{-.5cm}\sum_{\mat{z}^n\notin\mathcal{T}_{\epsilon}^{(n)}(G_Z)}\hspace{-.5cm}P_{\mat{Z}^n|\mat{X}^n}(\mat{z}^n|\mat{x}^n)\nonumber\\
&\ge\sum_{\mat{x}^n\notin\mathscr{T}_{\epsilon_x}^{(n)}(\mathcal{P})}P_{\mat{X}^n}(\mat{x}^n)\sum_{\mat{z}^n\notin\mathcal{T}_{\epsilon}^{(n)}(G_Z)}P_{\mat{Z}^n|\mat{X}^n}(\mat{z}^n|\mat{x}^n)\nonumber\\
&= 
\sum_{\mat{x}^n:T_{\mat{x}^n}\in{\mathcal{Q}_1}}P_{\mat{X}^n}(\mat{x}^n)\sum_{\mat{z}^n\notin\mathcal{T}_{\epsilon}^{(n)}(G_Z)}P_{\mat{Z}^n|\mat{X}^n}(\mat{z}^n|\mat{x}^n)\nonumber\\
&\quad\quad+\sum_{\mat{x}^n:T_{\mat{x}^n}\in{\mathcal{Q}_2}}P_{\mat{X}^n}(\mat{x}^n)\sum_{\mat{z}^n\notin\mathcal{T}_{\epsilon}^{(n)}(G_Z)}P_{\mat{Z}^n|\mat{X}^n}(\mat{z}^n|\mat{x}^n)\nonumber\\
&\quad\quad+~\ldots\label{eq:pzn_in_sums}
\end{align}
Consider the $i^{th}$ term in \eqref{eq:pzn_in_sums}. First, note that each of the sequences $\mat{x}^n$ in the sum belongs to the typical set $\mathcal{T}_{\epsilon_c}^{(n)}(Q_{X,i})$. Now, define $Q_{Z,i}\triangleq\sum_xP_{Z|X}Q_{X,i}$ and consider the set $\mathcal{T}_{\epsilon}^{(n)}(Q_{Z,i})$ of sequences $\mat{z}^n$ that are typical according to $Q_{Z,i}$.

From Lemma~\ref{lemma:positive_distance2} we know that, given $\epsilon_x$, there exists a fixed $d^\star>0$ such that $\tvs{G_Z-Q_{Z,i}}\ge d^\star$ for all $Q_{Z,i}$ ($i\in\{1,\ldots,\abs{\mathcal{Q}_{\epsilon_c}}\}$). Thus, for any $\epsilon$ such that $0<\epsilon<\frac{d^\star}{2}$, applying Lemma~\ref{lemma:disjoint_sets} we see that $\mathcal{T}_{\epsilon}^{(n)}(G_{Z})\cap\mathcal{T}_{\epsilon}^{(n)}(Q_{Z,i})=\emptyset$. Using this, we write
\begin{align*}
&\sum_{\mat{x}^n:T_{\mat{x}^n}\in{\mathcal{Q}_i}}P_{\mat{X}^n}(\mat{x}^n)\sum_{\mat{z}^n\notin\mathcal{T}_{\epsilon}^{(n)}(G_Z)}P_{\mat{Z}^n|\mat{X}^n}(\mat{z}^n|\mat{x}^n)\\
&\quad\quad\ge
\sum_{\mat{x}^n:T_{\mat{x}^n}\in{\mathcal{Q}_i}}P_{\mat{X}^n}(\mat{x}^n)\sum_{\mat{z}^n\in\mathcal{T}_{\epsilon}^{(n)}(Q_{Z,i})}P_{\mat{Z}^n|\mat{X}^n}(\mat{z}^n|\mat{x}^n).
\end{align*}
Moreover, by the conditional typicality lemma \cite[Lemma~2.12]{CK81}, we know that
\begin{align*}
\sum_{\mat{z}^n\in\mathcal{T}_{\epsilon}^{(n)}(Q_{Z,i})}P_{\mat{Z}^n|\mat{X}^n}(\mat{z}^n|\mat{x}^n)\ge 1-\delta_{\epsilon_c,\epsilon}(n)
\end{align*}
for every $\mat{x}^n$ such that $T_{\mat{x}^n}\in{\mathcal{Q}_i}$ and where $\delta_{\epsilon_c,\epsilon}(n)\triangleq\frac{1}{4n}\left(\frac{\abs{\mathcal{X}}\abs{\mathcal{Z}}}{\epsilon-\epsilon_c}\right)^2$. The term $\delta_{\epsilon_c,\epsilon}(n)$ goes to $0$ with $n$ and is fixed given the cover $\mathcal{Q}_{\epsilon_c}$. Thus,
\begin{align*}
\sum_{\mat{x}^n:T_{\mat{x}^n}\in{\mathcal{Q}_i}}P_{\mat{X}^n}(\mat{x}^n)\sum_{\mat{z}^n\notin\mathcal{T}_{\epsilon}^{(n)}(G_Z)}P_{\mat{Z}^n|\mat{X}^n}(\mat{z}^n|\mat{x}^n)&\\
\ge (1-\delta_{\epsilon_c,\epsilon}(n))\sum_{\mat{x}^n:T_{\mat{x}^n}\in{\mathcal{Q}_i}}P_{\mat{X}^n}(\mat{x}^n)&.
\end{align*}
Using this, we rewrite \eqref{eq:pzn_in_sums} as
\begin{align*}
\sum_{\mat{z}^n\notin\mathcal{T}_{\epsilon}^{(n)}(G_Z)}\hspace{-.25cm}P_{\mat{Z}^n}(\mat{z}^n)
&\ge \sum_{i=1}^{\abs{\mathcal{Q}_{\epsilon_c}}} \sum_{\mat{x}^n:T_{\mat{x}^n}\in{\mathcal{Q}_i}}\hspace{-.25cm}P_{\mat{X}^n}(\mat{x}^n)(1-\delta_{\epsilon_c,\epsilon}(n))\\
&\ge(1-\delta_{\epsilon_c,\epsilon}(n))\sum_{\mat{x}^n\notin\mathscr{T}_{\epsilon_x}^{(n)}(\mathcal{P})}P_{\mat{X}^n}(\mat{x}^n).
\end{align*}
Therefore, for any $0<\epsilon<\frac{d^\star}{2}$ we have
\begin{align*}
\limsup_{n\rightarrow\infty}\hspace{-.5cm}\sum_{\mat{z}^n\notin\mathcal{T}_{\epsilon}^{(n)}(G_Z)}\hspace{-.5cm}P_{\mat{Z}^n}(\mat{z}^n)
&\ge \limsup_{n\rightarrow\infty}(1-\delta_{\epsilon_c,\epsilon}(n))\hspace{-.5cm}\sum_{\mat{x}^n\notin\mathscr{T}_{\epsilon_x}^{(n)}(\mathcal{P})}\hspace{-.5cm}P_{\mat{X}^n}(\mat{x}^n)\\
&\ge\delta\\
&>0.
\end{align*}
This contradicts our initial hypothesis that $P_{\mat{X}^n}$ induces a type $T_{\mat{Z}^n}$ that satisfies \eqref{eq:conv_output_type}. Thus, we must have $\lim_{n\rightarrow\infty}\Pr(\mat{X}^n\notin\mathscr{T}^{(n)}_\epsilon(\mathcal{P}))=0$ for any $\epsilon>0$.

ii) Now, we show that this implies \eqref{eq:conv_input_etype}. To this end, we write
\begin{align}
&\tvs{\E{T_{\mat{X}^n}}-P^{(n)}_X}\nonumber\\
&=\lVert\Es{T_{\mat{X}^n}|\mat{X}^n\in\mathscr{T}_{\epsilon}^{(n)}(\mathcal{P})}\Pr(\mat{X}^n\in\mathscr{T}_{\epsilon}^{(n)}(\mathcal{P}))\nonumber\\
&\quad+\Es{T_{\mat{X}^n}|\mat{X}^n\hspace{-.1cm}\notin\mathscr{T}_{\epsilon}^{(n)}(\mathcal{P})}\Pr(\mat{X}^n\hspace{-.1cm}\notin\mathscr{T}_{\epsilon}^{(n)}(\mathcal{P}))-P^{(n)}_X\rVert_{\text{\tiny TV}}\nonumber\\
&\le\tvs{\Es{T_{\mat{X}^n}|\mat{X}^n\in\mathscr{T}_{\epsilon}^{(n)}(\mathcal{P})}\Pr(\mat{X}^n\in\mathscr{T}_{\epsilon}^{(n)}(\mathcal{P}))-P^{(n)}_X}\nonumber\\
&\quad+\tvs{\Es{T_{\mat{X}^n}|\mat{X}^n\notin\mathscr{T}_{\epsilon}^{(n)}(\mathcal{P})}\Pr(\mat{X}^n\notin\mathscr{T}_{\epsilon}^{(n)}(\mathcal{P}))}\label{eq:bound_input_etype}
\end{align}
for arbitrary $\epsilon>0$. 
Note that, for any two sequences $\mat{x}^n$ and $\mat{\tilde{x}}^n$ that belong to the set $\mathscr{T}_{\epsilon}^{(n)}(\mathcal{P})$, the convex combination of their types $T_{\mat{x}^n}$ and $T_{\mat{\tilde{x}}^n}$ satisfies
\begin{align*}
\tvs{\lambda T_{\mat{x}^n}+(1-\lambda)T_{\mat{\tilde{x}}^n}-P_X}<\epsilon 
\end{align*}
for some $P_X\in\mathcal{P}$ and any $\lambda\in[0,1]$. Thus, since 
\begin{align*}
\Es{T_{\mat{X}^n}|\mat{X}^n\in\mathscr{T}_{\epsilon}^{(n)}(\mathcal{P})}\Pr(\mat{X}^n\in\mathscr{T}_{\epsilon}^{(n)}(\mathcal{P}))
\end{align*}
is a convex combination of types of sequences in $\mathscr{T}_{\epsilon}^{(n)}(\mathcal{P})$, we have that
\begin{align*}
\tvs{\Es{T_{\mat{X}^n}|\mat{X}^n\in\mathscr{T}_{\epsilon}^{(n)}(\mathcal{P})}\Pr(\mat{X}^n\in\mathscr{T}_{\epsilon}^{(n)}(\mathcal{P}))-P^{(n)}_X}<\epsilon
\end{align*}
for some $P^{(n)}_X\in\mathcal{P}$. Regarding the second term in \eqref{eq:bound_input_etype}, we see that
\begin{align*}
&\tvs{\Es{T_{\mat{X}^n}|\mat{X}^n\notin\mathscr{T}_{\epsilon}^{(n)}(\mathcal{P})}\Pr(\mat{X}^n\notin\mathscr{T}_{\epsilon}^{(n)}(\mathcal{P}))}\\
&\quad\quad
=\Pr(\mat{X}^n\notin\mathscr{T}_{\epsilon}^{(n)}(\mathcal{P}))\tvs{\Es{T_{\mat{X}^n}|\mat{X}^n\notin\mathscr{T}_{\epsilon}^{(n)}(\mathcal{P})}}\\
&\quad\quad
\le\Pr(\mat{X}^n\notin\mathscr{T}_{\epsilon}^{(n)}(\mathcal{P}))\\
&\quad\quad
<\epsilon,
\end{align*}
where the inequality is satisfied for sufficiently large $n$. Combining the two bounds, we see that
\begin{align*}
\tvs{\E{T_{\mat{X}^n}}-P^{(n)}_X}<2\epsilon.
\end{align*}
Finally, we complete the proof by letting $\epsilon\rightarrow0$.
\end{proof}
We note that it is also possible to prove the preceding lemma by using the techniques in \cite{SV97} (in particular, \cite[Theorem~4]{SV97}), adapted to our notion of convergence.

We are now ready to prove Theorem~\ref{th:capacity_interference_pmf}.

\begin{proof}[Proof of Theorem~\ref{th:capacity_interference_pmf}]
The achievability result follows easily from Shannon's coding theorem. For the converse result, consider a sequence of $(n,2^{nR})$-codes that achieve the rate-interference type pair $(R,G_Z)$. 
The sequence, together with the uniform distribution on the messages, induces the joint distribution
\begin{align}
\frac{1}{\abs{\mathcal{M}}}P_{\mat{X}^n|M}P_{\mat{Y}^n|\mat{X}^n}P_{\mat{Z}^n|\mat{X}^n}P_{\hat{M}|\mat{Y}^n},\label{eq:convere_induced_dist}
\end{align}
with $P_{\mat{Y}^n|\mat{X}^n}=\prod P_{Y|X}$ and $P_{\mat{Z}^n|\mat{X}^n}=\prod P_{Z|X}$. Observe that in \eqref{eq:convere_induced_dist}, we have restricted our attention to distributions $P_{Y,Z|X}=P_{Y|X}P_{Z|X}$. As discussed before, this entails no loss of generality.

First, by the standard arguments based on Fano's inequality (e.g., see \cite[eq.~(3.3)]{GK11}), a vanishing error probability (i.e., \eqref{eq:achiev_cond1}) implies that
\begin{align}
nR
&\le\sum_{q=1}^n I(X_q;Y_q)+n\epsilon_n\nonumber\\
&=n\sum_{q=1}^n \frac{1}{n}I(X_q;Y_q|Q=q)+n\epsilon_n\nonumber\\
&= n I(X_Q;Y_Q|Q)+n\epsilon_n\nonumber\\
&\le n I(QX_Q;Y_Q)+n\epsilon_n\nonumber\\
&= n I(X_Q;Y_Q)+n\epsilon_n\label{eq:converse_last}
\end{align}
where $Q$ is a random variable uniformly distributed on $\{1,\ldots,n\}$ and independent of $(\mat{X}^n,\mat{Y}^n,\mat{Z}^n)$, and $\epsilon_n\ge0$ with $\epsilon_n\rightarrow0$ as $n\rightarrow\infty$. The last equality in \eqref{eq:converse_last} is justified by the fact that the DMC establishes the Markov chain $Q-X_Q-Y_Q$. Dividing by $n$, we obtain
\begin{align*}
R&\le I(X_Q;Y_Q)+\epsilon_n.
\end{align*}
This mutual information is evaluated for $P_{X_Q,Y_Q}$, which can be written as 
\begin{align*}
P_{X_Q,Y_Q}(x,y)&=P_{X_Q}(x)P_{Y|X}(y|x)\\
&=\E{T_{\mat{X}^n}(x)}P_{Y|X}(y|x).
\end{align*}
The first equality comes from the Markov chain $Q-X_Q-Y_Q$. The second equality is Property~2 in \cite[Section~VII.B.2]{CPC10}.

Now, condition \eqref{eq:achiev_cond2} on the type  of the interference for a sequence of $(n,2^{nR})$-codes that achieves the pair $(R,G_Z)$, combined with Lemma~\ref{lemma:conv_output_input}, implies that the expectation of the type of the input to the channel $\E{T_{\mat{X}^n}}$ must converge to a sequence $P_X^{(n)}$ with $P_X^{(n)}\in\mathcal{P}$ for all $n$. That is,
\begin{align*}
\E{T_{\mat{X}^n}(x)}P_{Y|X}(y|x)\rightarrow P_X^{(n)}(x)P_{Y|X}(y|x)
\end{align*}
or, equivalently, 
\begin{align*}
P_{X_Q,Y_Q}(x,y)\rightarrow P_X^{(n)}(x)P_{Y|X}(y|x).
\end{align*}
Since the mutual information is a continuous function of the input distribution, this convergence implies that any sequence of $(n,2^{nR})$-codes must satisfy
\begin{align*}
R&\le \limsup_{n\rightarrow\infty} I(X;Y)\rvert_{P_X^{(n)}}\\
&\le \max_{P_X\in\mathcal{P}} I(X;Y).
\end{align*}
In conclusion, achievability of the pair $(R,G_Z)$ implies that $(R,G_Z)\in\mathcal{C}$.
\end{proof}

\section{Multiple Users}
\label{sec:multiple_user}
Consider the scenario depicted in Figure~\ref{fig:TwoUserScenario}. Two transmitters want to communicate with their respective receivers through a channel governed by a conditional product pmf 
\begin{align}
P_{Y_1,Y_2,Z|X_1,X_2}=P_{Y_1|X_1}P_{Y_2|X_2}P_{Z|X_1,X_2}.\label{eq:multiple_pmf}
\end{align}
The marginals $P_{Y_1|X_1}$ and $P_{Y_2|X_2}$ model orthogonal communication channels between pairs of encoders and decoders, whereas $P_{Z|X_1,X_2}$ models the joint disturbance that the two transmissions create to the observer. That is, although the user pairs do not hamper each other's transmission, they create interference at a third external node, the observer. To control this interference, the two transmitters have access to a unidirectional rate-limited noiseless link from the first to the second encoder. They can use this resource to coordinate their transmissions and shape the type of the interference $T_{\mat{z}^n}(z)$.

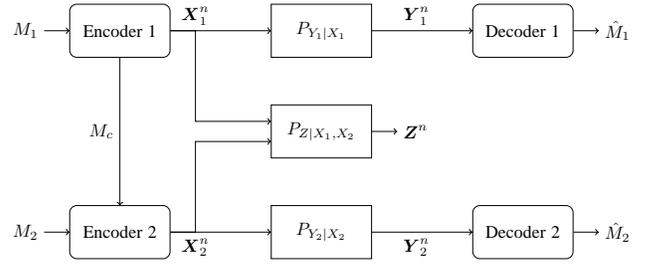
\begin{figure}[t]
\centering
\resizebox{\columnwidth}{!}{
\tikzstyle{line} = [draw, ->]
\tikzstyle{block} = [rectangle, draw, text width=5em, text centered, rounded corners, minimum height=3em]
\tikzstyle{channel} = [rectangle, draw, text width=5em, text centered, minimum height=3em]
\tikzstyle{invisible} = [draw=none]
\begin{tikzpicture}[node distance = 4cm, auto]
    \node [block] (Encoder_1) {Encoder 1};
	\node [channel, right of=Encoder_1] (ch_1) {$P_{Y_1|X_1}$};
	\node [channel, below of=ch_1, node distance = 2cm] (ch_3) {$P_{Z|X_1,X_2}$};
	\node [channel, below of=ch_3, node distance = 2cm] (ch_2) {$P_{Y_2|X_2}$};               	
    \node [block, right of=ch_1] (Decoder_1) {Decoder 1};
	\node [block, right of=ch_2] (Decoder_2) {Decoder 2};     
    \node [block, left of=ch_2] (Encoder_2) {Encoder 2};    
       
    \path [line] (Encoder_1) -- (ch_1) ;
    \path [line] (ch_1) -- ([xshift=.5cm]ch_1.east) [above right] node{$\mat{Y}_1^n$} -- (Decoder_1);    
    \path [line] (Encoder_2) -- (ch_2) ;
    \path [line] (ch_2) -- ([xshift=.5cm]ch_2.east) [below right] node{$\mat{Y}_2^n$} -- (Decoder_2); 
	\path [line] (Encoder_1.east) -- (1.5,0) [above] node{$\mat{X}_1^n$} -- (1.5,-1.8) -- (1.5,-1.8) -- ([yshift=.2 cm]ch_3.west); 
	\path [line] (Encoder_2.east) -- (1.5,-4) [below] node{$\mat{X}_2^n$} -- (1.5,-2.2) -- (1.5,-2.2) -- ([yshift=-.2 cm]ch_3.west); 
	\path [line] (Encoder_1) -- (0,-2) [left] node{$M_c$} -- (Encoder_2) ;
	\path [line] ([xshift=-.5cm]Encoder_1.west) [left] node{$M_1$} -- (Encoder_1.west);
	\path [line] (Decoder_1.east) -- ([xshift=.5cm]Decoder_1.east) [right] node{$\hat{M}_1$};	
	\path [line] ([xshift=-.5cm]Encoder_2.west) [left] node{$M_2$} -- (Encoder_2.west) ;	
	\path [line] (Decoder_2.east) -- ([xshift=.5cm]Decoder_2.east) [right] node{$\hat{M}_2$};	
	\path [line] (ch_3.east) -- ([xshift=.5cm]ch_3.east) [right] node{$\mat{Z}^n$};	
\end{tikzpicture}
}
\caption{Scenario for coordination of communications with interference constraints.}
\label{fig:TwoUserScenario}
\end{figure}

Observe that our model makes no assumption on how the two transmitters interfere with the observer, beyond the structure in \eqref{eq:multiple_pmf} (i.e., memoryless interference at symbol level). By choosing appropriately $P_{Z|X_1,X_2}$, we can model a scenarios ranging from symbol-level synchronization to carrier level synchronization, among others.

We now introduce the necessary definitions and state our main results for this scenario.

\begin{definition}[Code]
An $(n,2^{nR_1},2^{nR_2},2^{nR_c})$-code for the scenario in Figure~\ref{fig:TwoUserScenario} consists of: 
\begin{itemize}
\item three sets of messages: 
\begin{align*}
\mathcal{M}_j&\triangleq\{1,\ldots,\lceil 2^{nR_j}\rceil\}\text{ for }j\in\{1,2\},\\
\mathcal{M}_c&\triangleq\{1,\ldots,\lfloor 2^{nR_c}\rfloor\},
\end{align*}
\item two encoding functions
\begin{align*}
\mat{x}_1^n&:\mathcal{M}_1\rightarrow \mathcal{X}_1^n,\\
\mat{x}_2^n&:\mathcal{M}_2\times\mathcal{M}_c\rightarrow \mathcal{X}_2^n,
\end{align*}
\item a coordination function $c:\mathcal{M}_1\rightarrow \mathcal{M}_c$,
\item and two decoding functions $\hat{m}_j:\mathcal{Y}_j^n\rightarrow \mathcal{M}_j\cup\{e\}$ for $j\in\{1,2\}$.
\end{itemize}
\end{definition}
We assume that the message pair $(M_1,M_2)$ is uniformly distributed over the set $\mathcal{M}_1\times\mathcal{M}_2$. The notion of achievability and the definition of the communication-interference capacity region $\mathcal{C}$ are straightforward extensions of those introduced in the single user case. As for that case, the communication-interference capacity region $\mathcal{C}$ is convex. However, observe that the factorization in \eqref{eq:multiple_pmf} entails a loss of generality. 

Consider the following set:
\begin{align*}
\mathcal{R}\triangleq\left\{
\begin{array}{l}
(R_1,R_2,R_c,Q_Z)\text{ s.t. }\exists~P_UP_{X_1|U}P_{X_2|U}\text{ s.t. }\\
\quad\quad R_1<I(X_1;Y_1),\\
\quad\quad R_2<[I(X_2;Y_2)-I(U;X_2)]^+,\\
\quad\quad R_c> I(U;X_1),\\
\quad\quad\sum\limits_{u,x_1,x_2}P_UP_{X_1|U}P_{X_2|U}P_{Z|X_1,X_2}=Q_{Z}
\end{array}
\right\}
\end{align*}
where $[x]^+\triangleq\max(x,0)$. Let $\operatorname{conv}(\mathcal{R})$ denote the convex hull of $\mathcal{R}$. Our main result for the channel model in Figure~\ref{fig:TwoUserScenario} is the following partial characterization.

\begin{theorem}\label{th:interfcontr_mu}
The communication-interference capacity region $\mathcal{C}$ satisfies
\begin{align*}
\operatorname{conv}(\mathcal{R})\subseteq\mathcal{C}.
\end{align*}
\end{theorem}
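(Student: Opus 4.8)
I would prove achievability of every tuple in $\mathcal{R}$ by a random-coding argument that grafts the empirical-coordination mechanism of \cite{CPC10} onto ordinary point-to-point channel coding, and then deduce $\operatorname{conv}(\mathcal{R})\subseteq\mathcal{C}$ from the fact (already noted in the text) that $\mathcal{C}$ is convex and closed. Fix $(R_1,R_2,R_c,Q_Z)\in\mathcal{R}$ with a witnessing pmf $P_UP_{X_1|U}P_{X_2|U}$, and choose an auxiliary rate $\tilde R$ with $I(U;X_2)<\tilde R<I(X_2;Y_2)-R_2$ (possible because the inequalities defining $\mathcal{R}$ are strict, which in particular forces $I(X_2;Y_2)>I(U;X_2)$). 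Generate independent random codebooks: $\approx 2^{nR_c}$ sequences $\mat{U}^n(m_c)\sim\prod P_U$; $\approx 2^{nR_1}$ sequences $\mat{X}_1^n(m_1)\sim\prod P_{X_1}$ with $P_{X_1}=\sum_uP_UP_{X_1|U}$; and $\approx 2^{n(R_2+\tilde R)}$ sequences $\mat{X}_2^n(m_2,l)\sim\prod P_{X_2}$ with $P_{X_2}=\sum_uP_UP_{X_2|U}$. Encoder~1 transmits $\mat{X}_1^n(M_1)$ and sets $M_c=c(M_1)$ equal to an index of a codeword $\mat{U}^n(M_c)$ jointly typical with $\mat{X}_1^n(M_1)$; encoder~2, given $(M_2,M_c)$, picks an index $L$ with $\mat{U}^n(M_c)$ and $\mat{X}_2^n(M_2,L)$ jointly typical and transmits $\mat{X}_2^n(M_2,L)$; decoder~$j$ performs joint-typicality decoding of $M_j$ from $\mat{Y}_j^n$, decoder~2 searching over $l$ as well.

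\textbf{Main steps.} Averaging over codebooks and the uniform messages, I would bound four bad events. The two coordination (covering) steps succeed with probability $\to 1$ by the covering lemma, since $R_c>I(U;X_1)$ and $\tilde R>I(U;X_2)$. Decoder~1 errs with vanishing probability because $R_1<I(X_1;Y_1)$. Decoder~2 errs with vanishing probability because for each wrong $m_2$ all codewords $\mat{X}_2^n(m_2,\cdot)$ are independent of $\mat{Y}_2^n$ (the selection of $L$ touches only the $m_2'=M_2$ sub-codebook and $\mat{U}^n(M_c)$), and there are $\approx 2^{n(R_2+\tilde R)}<2^{nI(X_2;Y_2)}$ of them. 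On the event that both covering steps succeed, $(\mat{U}^n,\mat{X}_1^n)$ is jointly typical for $P_UP_{X_1|U}$ and $(\mat{U}^n,\mat{X}_2^n)$ for $P_UP_{X_2|U}$; since the $\mat{X}_2$-codebook is independent of $\mat{X}_1^n$ and $X_1-U-X_2$ is Markov, a Markov-lemma argument (this is exactly where the conditional-independence structure of $P_UP_{X_1|U}P_{X_2|U}$ and the coordination analysis of \cite{CPC10} are used) shows the triple $(\mat{U}^n,\mat{X}_1^n,\mat{X}_2^n)$ is jointly typical for $P_UP_{X_1|U}P_{X_2|U}$ with probability $\to 1$; hence $T_{\mat{X}_1^n,\mat{X}_2^n}$ converges in probability in total variation to $\sum_uP_UP_{X_1|U}P_{X_2|U}$. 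Passing this through $\prod P_{Z|X_1,X_2}$ and invoking the conditional typicality lemma exactly as in the proof of Lemma~\ref{lemma:conv_output_input} then gives $\tvs{T_{\mat{Z}^n}-Q_Z}\rightarrow 0$ in probability.

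\textbf{Conclusion.} All of the above events have expectation tending to zero over the random codebooks, so for each $n$ a deterministic code exists for which they are simultaneously small; a diagonalization over $n$ (shrinking the typicality parameter slowly) yields one sequence of $(n,2^{nR_1},2^{nR_2},2^{nR_c})$-codes satisfying both vanishing-error conditions and $\tvs{T_{\mat{Z}^n}-Q_Z}\rightarrow 0$ in probability. Thus each tuple of $\mathcal{R}$ lies in $\mathcal{C}$, and since $\mathcal{C}$ is convex and closed, $\operatorname{conv}(\mathcal{R})\subseteq\mathcal{C}$ (equivalently, convex combinations are obtained by time-sharing between such codes).

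\textbf{Main obstacle.} I expect the crux to be the Markov-lemma step: showing rigorously that the \emph{selected} codeword $\mat{X}_2^n(M_2,L)$ — not a fresh i.i.d.\ sequence drawn conditionally on $\mat{U}^n$ — together with $\mat{U}^n$ and $\mat{X}_1^n$ has the desired joint type, and carrying this through the paper's ``convergence in probability in total variation'' notion with mutually consistent choices of all the $\epsilon$-parameters. The key point making it work is that, conditioned on being jointly typical with $\mat{U}^n$, such a codeword is automatically jointly typical with the pair $(\mat{U}^n,\mat{X}_1^n)$, which hinges on $I(X_2;U,X_1)=I(X_2;U)$, i.e.\ precisely on the Markov chain built into $\mathcal{R}$.
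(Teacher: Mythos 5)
Your proposal matches the paper's proof essentially step for step: the same superposition-style codebook with a fidelity index $L$ for Encoder~2, the covering lemma applied twice (for $R_c>I(U;X_1)$ and for the excess rate exceeding $I(U;X_2)$), the strong Markov lemma of \cite{CPC10} to establish joint typicality of $(\mat{U}^n,\mat{X}_1^n,\mat{X}_2^n)$ and hence convergence of $T_{\mat{Z}^n}$ to $Q_Z$, the packing lemma for both decoders, and time sharing for the convex hull. The obstacle you flag as the crux --- that the \emph{selected} codeword $\mat{X}_2^n(M_2,L)$, not a fresh conditionally i.i.d.\ draw, must satisfy the Markov lemma --- is precisely the point the paper handles by invoking the permutation-invariance condition of \cite[Theorem~12]{CPC10}.
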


Before proving the theorem, we make the following two observations about $\mathcal{R}$: i) The random variable $U$ plays the role of the coordination message sent from Encoder~1 to Encoder~2. By setting $U=\emptyset$, we obtain $R_c=0$ and recover the case where the users are not coordinated (i.e., $X_1$ and $X_2$ are independent). For most distributions $P_{Z|X_1,X_2}$, our strategy strictly improves upon uncoordinated communication. ii) The coordination message $U$ couples the rates $R_1$ and $R_2$ in two ways. First, the choices of input distributions have to be compatible in the sense that they yield the desired $G_Z$. In addition, the rate for Encoder~2 has a penalty term that reflects that the transmitted signals are correlated. That is, $X_2$ carries information about $X_1$. This is similar to the situation in Gel'fand Pinsker coding, where the transmission is aligned with the channel state and thus carries information about it \cite{GP80}. These considerations are illustrated by the following example.

\begin{example}\label{example:constellation}
Consider the scenario in which each of the two encoders can make use of the set of $16$ symbols depicted in Figure~\ref{fig:constellation16} as inputs to the channel. Assume that the observer tolerates only low and mild levels of interference. This means that the two encoders are not allowed to use the black-circle symbols simultaneously. For simplicity, assume that the channels $P_{Y_1|X_1}$ and $P_{Y_2|X_2}$ are noiseless.

Without coordination, one of the two users is restricted to use only the subset of red-diamond symbols. Assume that the restriction is placed on the second user. This yields the rate pair $(R_1,R_2)=(4,2)$. In contrast, if Encoder~$1$ uses the coordination link to declare whether it will use a black-circle or a red-diamond symbol, Encoder~$2$ can opportunistically choose its constellation to boost its communication rate. For example, if Encoder~$1$ makes use of all $16$ symbols with equal frequency, then Encoder~$2$ is forced to use the red-diamond symbols (i.e., transmit $2$ [bpcu]) $75\%$ of the times. However, in the remaining $25\%$, it can use any of the black-circle symbols (i.e., $\log_2 12$ [bpcu]). This yields
\begin{align*}
R_2=\frac{3}{4}2+\frac{1}{4}\log_2 12\approx 2.4\text{~[bpcu]}.
\end{align*}
Thus, we have $(R_1,R_2)=(4,2.4)$. Observe that the constraint placed by the observer does not preclude Encoder~$2$ from using \emph{any} of the symbols in Figure~\ref{fig:constellation16} when Encoder~$1$ sends a red-diamond symbol. However, Decoder~$2$ needs to know whether the transmitted symbol corresponds to $2$ or $4$ bits. By restricting its input to belong to the set of black-circle symbols, Encoder~$2$ is conveying information about the message of Encoder~$1$, namely that the current input consists of one of the red-diamond symbols.

\begin{figure}[t]
\centering
\resizebox{0.4\columnwidth}{!}{
\tikzstyle{arrow} = [draw, ->]
\tikzstyle{invisible} = [draw=none, scale=0.5]
\tikzstyle{dot} = [circle, draw, fill, scale=0.75]
\tikzstyle{dotr} = [diamond, draw, fill, scale=0.75, color=red!60]
\begin{tikzpicture}[node distance = 2cm, auto]
    \node [dot] (S11){};
    \node [dot, below of=S11] (S21){};
    \node [dot, below of=S21] (S31){};
    \node [dot, below of=S31] (S41){};

    \node [dot, right of=S11] (S12){};
    \node [dotr, below of=S12] (S22){};
    \node [dotr, below of=S22] (S32){};
    \node [dot, below of=S32] (S42){}; 

    \node [dot, right of=S12] (S13){};
    \node [dotr, below of=S13] (S23){};
    \node [dotr, below of=S23] (S33){};
    \node [dot, below of=S33] (S43){};   
    
    \node [dot, right of=S13] (S14){};
    \node [dot, below of=S14] (S24){};
    \node [dot, below of=S24] (S34){};
    \node [dot, below of=S34] (S44){};                    
\end{tikzpicture}
}
\caption{Constellation with $16$ symbols in Example~\ref{example:constellation}. The constraint on the interference at the observer precludes transmission of black-circle symbols by both encoders at the same time.}
\label{fig:constellation16}
\end{figure}
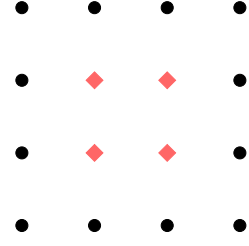

A coordination rate equal to $R_c=0.81$ [bpa] is sufficient to implement this protocol if Encoder~$1$ uses a lossless source coding algorithm to declare its intentions for a batch of channel uses.
\end{example}

\begin{proof}[Proof of Theorem~\ref{th:interfcontr_mu}]
Fix arbitrary $\epsilon>0$ and let $\delta(\epsilon)>0$ be some positive function such that $\delta(\epsilon)\rightarrow0$ as $\epsilon\rightarrow0$. Choose a tuple $(R_1,R_2,R_c,Q_Z)\in\mathcal{R}$ and let $\tilde{R}_2>R_2$. Let $P_UP_{X_1|U}P_{X_2|U}$ be the corresponding distribution.

\subsection*{Codebook generation}
\begin{itemize}
\item For every $m_c\in\mathcal{M}_c$, generate a sequence $\mat{u}^n(m_c)$ according to $\prod_{i=1}^nP_U(u_i)$.
\item For every $m_1\in\mathcal{M}_1$, generate a codeword $\mat{x}_1^n(m_1)$ according to $\prod_{i=1}^nP_{X_1}(x_{1i})$.
\item For every $m_2\in\mathcal{M}_2$ and every $l\in\{1,..,\lceil 2^{n(\tilde{R}_2-R_2)}\rceil\}$, generate a codeword $\mat{x}_2^n(l,m_2)$ according to $\prod_{i=1}^nP_{X_2}(x_{2i})$.
\end{itemize}

\subsection*{Encoding}
\begin{enumerate}
\item To transmit the message $m_1$, Encoder~$1$ puts the codeword $\mat{x}_1^n(m_1)$ into the channel.
\item To generate the coordination message given $\mat{x}_1^n(m_1)$, Encoder~$1$ searches for an index $m_c$ such that $(\mat{u}^n(m_c),\mat{x}_1^n(m_1))\in\mathcal{T}_{\epsilon}^{(n)}(P_{U,X_1})$. If more than one such $m_c$ exists, it chooses one at random among the candidates. If none exists, then it chooses $m_c=1$. Finally, it conveys the index $m_c$ to Encoder~$2$.
\item To transmit the message $m_2$, Encoder~$2$ searches for an index $l$ such that $(\mat{u}^n(m_c),\mat{x}_2^n(l,m_2))\in\mathcal{T}_\epsilon^{(n)}(P_{U,X_2})$. If more than one such $l$ exists, it chooses one at random among the candidates. If none exists, then it chooses $l=1$. Finally, it puts the codeword $\mat{x}_2^n(l,m_c)$ into the channel. 
\end{enumerate}

\subsection*{Decoding}
\begin{itemize}
\item Given the observation $\mat{y}_1^n$, Decoder~$1$ searches for a unique index $\hat{m}_1$ such that 
$(\mat{x}_1^n(\hat{m}_1),\mat{y}_1^n)\in\mathcal{T}_{\epsilon}^{(n)}(P_{X_1,Y_1})$. If no such $\hat{m}_1$ is found or if it is not unique, the decoder declares an error. 
\item Given the observation $\mat{y}_2^n$, Decoder~$2$ searches for a unique index $\hat{m}_2$ such that 
$(\mat{x}_2^n(\hat{l},\hat{m}_2),\mat{y}_2^n)\in\mathcal{T}_{\epsilon}^{(n)}(P_{X_2,Y_2})$ for some $\hat{l}\in\{1,\ldots,\lceil2^{n(\tilde{R}_2-R_2)}\rceil\}$. If no such $\hat{m}_2$ is found or if it is not unique, the decoder declares an error. 
\end{itemize}

\subsection*{Analysis of the error probability}
We consider the error probability averaged over the ensemble of codebooks. Let $\mathcal{E}$ denote the error event and consider a fixed $n$. Due to the symmetry in the generation of the codebooks, we can assume that $M_1=M_2=1$ without loss of generality. That is,
\begin{align*}
\Pr(\mathcal{E})=\Pr(\mathcal{E}|(M_1,M_2)=(1,1)).
\end{align*}

To bound the error probability, consider the following events:
\begin{align*}
\mathcal{E}_Z&\triangleq\{\tvs{T_{\mat{Z}^n}-Q_Z}\ge\epsilon\},\\
\mathcal{E}_i&\triangleq\{\hat{M}_i\ne 1\}
\end{align*}
for $i=\{1,2\}$. The error probability satisfies 
\begin{align}
\Pr(\mathcal{E})
&\le\Pr(\mathcal{E}_Z|(M_1,M_2)=(1,1))\nonumber\\
&\quad+\Pr(\mathcal{E}_1|M_1=1)+\Pr(\mathcal{E}_2|M_2=1).\label{eq:3bounds}
\end{align}
We bound each of the three terms individually. For the first term in \eqref{eq:3bounds}, consider the event
\begin{align*}
\mathcal{E}_{Z0}\hspace{-.1cm}
\triangleq\hspace{-.1cm}\{\hspace{-.05cm}\tvs{T_{\mat{U}^n\hspace{-.1cm},\mat{X}_1^n(1),\mat{X}_2^n(L,1),\mat{Z}^n}\hspace{-.1cm}-\hspace{-.1cm}P_{Z|X_1,X_2}\hspace{-.05cm}P_{X_1|U}\hspace{-.05cm}P_{X_2|U}\hspace{-.05cm}P_{U}\hspace{-.05cm}}\hspace{-.1cm}\ge\hspace{-.1cm}\epsilon\hspace{-.05cm}\}\\
\end{align*}
and note that, by the basic properties of strong typicality, for every $(\mat{u}^n,\mat{x}_1^n,\mat{x}_2^n,\mat{z}^n)$ such that
\begin{align*}
\tvs{T_{\mat{u}^n,\mat{x}_1^n,\mat{x}_2^n,\mat{z}^n}-P_{Z|X_1,X_2}P_{X_1|U}P_{X_2|U}P_{U}}<\epsilon,
\end{align*}
we have
\begin{align*}
\tvs{T_{\mat{z}^n}-Q_Z}<\epsilon.
\end{align*}
Therefore,
\begin{align*}
\Pr(\mathcal{E}_Z|(M_1,M_2)=(1,1))\le\Pr(\mathcal{E}_{Z0}).
\end{align*}
Now, let $\epsilon'=\frac{\epsilon}{4}$ and
\begin{align*}
\mathcal{E}_{Z1}&\triangleq\{(\mat{U}^n(m_c),\mat{X}_1^n(1))\hspace{-.05cm}\notin\hspace{-.05cm}\mathcal{T}_{\epsilon'}^{(n)}(P_{U,X_1})\text{ for all }m_c\hspace{-.05cm}\in\hspace{-.05cm}\mathcal{M}_c\},\\
\mathcal{E}_{Z2}&\triangleq\{(\mat{U}^n(M_c),\mat{X}_2^n(l,1))\notin\mathcal{T}_{\epsilon'}^{(n)}(P_{U,X_2})\nonumber\\
&\quad\quad\quad\text{ for all }l\in\{1,\ldots,\lceil 2^{n(\tilde{R}_2-R_2)}\rceil\}\},\\
\mathcal{E}_{Z3}&\triangleq\{(\mat{U}^n(M_c),\mat{X}_1^n(1),\mat{X}_2^n(L,1))\notin\mathcal{T}_{\epsilon}^{(n)}(P_{U,X_1,X_2})\},\\
\mathcal{E}_{Z4}&\triangleq\{(\mat{U}^n(M_c),\mat{X}_1^n(1),\mat{X}_2^n(L,1)\hspace{-.05cm},\hspace{-.05cm}\mat{Z}^n\hspace{-.05cm})\hspace{-.1cm}\notin\hspace{-.05cm}\mathcal{T}_{\epsilon}^{(n)}\hspace{-.05cm}(\hspace{-.05cm}P_{U,X_1,X_2,Z}\hspace{-.05cm})\hspace{-.05cm}\}\hspace{-.05cm}.
\end{align*}
Here $M_c$ and $L$ are the random variables corresponding to the coordination index and the index chosen by Encoder~2, respectively. We have that 
\begin{align}
\Pr(\mathcal{E}_{Z0})&\le\Pr(\mathcal{E}_{Z1})+\Pr(\mathcal{E}_{Z2})\nonumber\\
&\quad+\Pr(\mathcal{E}_{Z3}\cap(\mathcal{E}_{Z1}^c\cap\mathcal{E}_{Z2}^c))
+\Pr(\mathcal{E}_{Z4}\cap\mathcal{E}_{Z3}^c).\label{eq:Puxyz}
\end{align}
By the covering lemma \cite[Lemma~3.3]{GK11}, $\Pr(\mathcal{E}_{Z1})\rightarrow0$ as $n\rightarrow\infty$ if $R_c>I(U;X_1)-\delta(\epsilon')$. For the second term in \eqref{eq:Puxyz}, note that the distribution of $(\mat{U}^n(M_c),\mat{X}_2^n(l,1))$ is the same for all values of $M_c$ and $l$; they are independent. Thus, again by the covering lemma, $\Pr(\mathcal{E}_{Z2})\rightarrow0$ as $n\rightarrow\infty$ if $\tilde{R}_2-R_2>I(U;X_2)-\delta(\epsilon')$. 

Regarding the third term in \eqref{eq:Puxyz}, we observe the following. Given $\mathcal{E}_{Z1}^c$, we have that $(\mat{U}^n(M_c),\mat{X}_1^n(1))\in\mathcal{T}_{\epsilon}^{(n)}(P_{U,X_1})$. Similarly, given $\mathcal{E}_{Z2}^c$, we have that $(\mat{U}^n(M_c),\mat{X}_2^n(L,1))\in\mathcal{T}_{\epsilon}^{(n)}(P_{U,X_2})$. Thus, by the strong Markov Lemma \cite[Theorem~12]{CPC10}, $\Pr(\mathcal{E}_{Z3}\cap(\mathcal{E}_{Z1}^c\cap\mathcal{E}_{Z2}^c))\rightarrow0$ as $n\rightarrow\infty$. The conditions of the lemma are satisfied because $X_1-U-X_2$ form a Markov chain and the distribution of $\mat{X}_2^n$ is permutation invariant (as defined in \cite{CPC10}) with respect to $\mat{u}^n$.

Finally, for the last term in \eqref{eq:Puxyz}, we have that $\mat{Z}^n$ is generated by passing a $\epsilon$-typical pair $(\mat{X}_1^n,\mat{X}_2^n)$ through the channel $P_{Z|X_1,X_2}$. Thus, by the law of large numbers, $\Pr(\mathcal{E}_{Z4}\cap\mathcal{E}_{Z3}^c)\rightarrow0$ as $n\rightarrow0$.

We now turn our attention to the term $\Pr(\mathcal{E}_1|M_1=1)$ in \eqref{eq:3bounds}. Consider the following events
\begin{align*}
\mathcal{E}_{11}&\triangleq\{(\mat{X}_1^n(1),\mat{Y}_1^n)\notin\mathcal{T}_\epsilon^{(n)}(P_{X_1,Y_1})\},\\
\mathcal{E}_{12}&\triangleq\{(\mat{X}_1^n(\hat{m}_1),\mat{Y}_1^n)\in\mathcal{T}_\epsilon^{(n)}(P_{X_1,Y_1})\text{ for some }\hat{m}_1\ne1\}.
\end{align*}
We have that
\begin{align*}
\Pr(\mathcal{E}_1|M_1=1)\le \Pr(\mathcal{E}_{11})+\Pr(\mathcal{E}_{12}),
\end{align*}
where $\Pr(\mathcal{E}_{11})\rightarrow0$ as $n\rightarrow0$ by the law of large numbers, and $\Pr(\mathcal{E}_{12})\rightarrow0$ as $n\rightarrow0$ if 
$R_1<I(X_1;Y_1)-\delta(\epsilon)$ by the packing lemma \cite[Lemma~3.1]{GK11}.

Similarly, if $\tilde{R}_2<I(X_2;Y_2)-\delta(\epsilon)$ then $\Pr(\mathcal{E}_2|M_2=1)\rightarrow 0$ as $n\rightarrow0$. Combining all the terms and letting $\epsilon\rightarrow0$, we obtain
\begin{align*}
R_c&>I(U;X_1),\\
R_1&<I(X_1;Y_1),\\
R_2&<[\tilde{R}_2-I(U;X_2)]^+<[I(X_2;Y_2)-I(U;X_2)]^+,
\end{align*}
as desired. The remaining tuples in the convex hull are achieved by time sharing.
\end{proof}

\section{Conclusion}
\label{sec:concl}
We have proposed a generic model in terms of types (i.e., empirical distributions) for studying the effect of the interference induced by a communication process. First, we have considered the case of a single communication link and shown the existence of a tradeoff between the rate of communication and the type of the induced interference. 
To quantify this tradeoff, we have introduced the notion of communication-interference capacity region and we have explicitly characterized it. Then, we have studied a multiple-user scenario with unidirectional coordination of the transmitters. In this case, we have shown that the tradeoff involves the interference type and the communication rate as well as the coordination rate. We have established an inner bound to the communication-interference capacity region as a partial characterization of the tradeoff.


\bibliographystyle{IEEEtran}
\bibliography{reference}

\begin{thebibliography}{1}
\providecommand{\url}[1]{#1}
\csname url@samestyle\endcsname
\providecommand{\newblock}{\relax}
\providecommand{\bibinfo}[2]{#2}
\providecommand{\BIBentrySTDinterwordspacing}{\spaceskip=0pt\relax}
\providecommand{\BIBentryALTinterwordstretchfactor}{4}
\providecommand{\BIBentryALTinterwordspacing}{\spaceskip=\fontdimen2\font plus
\BIBentryALTinterwordstretchfactor\fontdimen3\font minus
  \fontdimen4\font\relax}
\providecommand{\BIBforeignlanguage}[2]{{%
\expandafter\ifx\csname l@#1\endcsname\relax
\typeout{** WARNING: IEEEtran.bst: No hyphenation pattern has been}%
\typeout{** loaded for the language `#1'. Using the pattern for}%
\typeout{** the default language instead.}%
\else
\language=\csname l@#1\endcsname
\fi
#2}}
\providecommand{\BIBdecl}{\relax}
\BIBdecl

\bibitem{GK11}
A.~El~Gamal and Y.-H. Kim, \emph{Network information theory}.\hskip 1em plus
  0.5em minus 0.4em\relax Cambridge, UK: Cambridge University Press, 2011.

\bibitem{BG11}
B.~Bandemer and A.~El~Gamal, ``Communication with disturbance constraints,'' in
  \emph{Proc.~IEEE Int.~Symp. on~Information Theory (ISIT)}, Jul. 2011, pp.
  2090--2094.

\bibitem{Sha48}
C.~Shannon, ``A mathematical theory of communication,'' \emph{Bell Systems
  Technical Journal}, vol.~27, pp. 379--423 \& 623--656, 1948.

\bibitem{Gas07}
M.~Gastpar, ``On capacity under receive and spatial spectrum-sharing
  constraints,'' \emph{IEEE Transactions on Information Theory}, vol.~53,
  no.~2, pp. 471--487, Feb. 2007.

\bibitem{SV97}
S.~Shamai and S.~Verd{\'u}, ``The empirical distribution of good codes,''
  \emph{IEEE Transactions on Information Theory}, vol.~43, no.~3, pp. 836--846,
  1997.

\bibitem{CPC10}
P.~Cuff, H.~Permuter, and T.~Cover, ``Coordination capacity,'' \emph{IEEE
  Transactions on Information Theory}, vol.~56, no.~9, pp. 4181--4206, Sep.
  2010.

\bibitem{Rud76}
W.~Rudin, \emph{Principles of mathematical analysis}, 3rd~ed.\hskip 1em plus
  0.5em minus 0.4em\relax New York, USA: McGraw-Hill, 1976.

\bibitem{CK81}
I.~Csisz{\'a}r and J.~K{\"o}rner, \emph{Information theory: Coding theorems for
  discrete memoryless channels}, Budapest, Hungary, 1981.

\bibitem{GP80}
S.~Gel'fand and M.~Pinsker, ``Coding for channel with random parameters,''
  \emph{Prob. Contr. and Inform. Theory}, vol.~9, no.~1, pp. 19--31, 1980.

\end{thebibliography}

\end{document}